\newtheorem{theorem}{Theorem}[section]
\newtheorem{lemma}[theorem]{Lemma}
\newtheorem*{lemma*}{Lemma}
\newtheorem{proposition}[theorem]{Proposition}
\theoremstyle{definition}
\newtheorem{definition}{Definition}[section]
\DeclareMathAlphabet{\mathbfit}{OML}{cmm}{b}{it}
\newcommand{\suchthat}{\;\ifnum\currentgrouptype=16 \middle\fi|\;}
\newcommand{\myref}[2]{\hyperref[#2]{$#1$\ref*{#2}}}
\newcommand{\AutoAdjust}[3]{\mathchoice{ \left #1 #2  \right #3}{#1 #2 #3}{#1 #2 #3}{#1 #2 #3} }
\newcommand{\RangeMN}[2]{\AutoAdjust{\{}{{#1},\ldots,{#2}}{\}}}
\newcommand{\RangeAB}[2]{\AutoAdjust{[}{{#1},{#2}}{]}}
\newcommand{\RangeAb}[2]{\AutoAdjust{[}{{#1},{#2}}{)}}
\newcommand{\Rangeab}[2]{\AutoAdjust{(}{{#1},{#2}}{)}}
\newcommand{\RangeN}[1]{\InBrackets{#1}}
\newcommand{\Reals}{\mathbb{R}}
\newcommand{\PosReals}{\Reals_+}
\newcommand{\PosInts}{\mathbb{N}}
\newcommand{\NumAgents}{n}
\newcommand{\Rev}{R}
\newcommand{\MaxVal}{{\bar{v}}}
\newcommand{\MaxVals}{{\bar{\mathbf v}}}
\newcommand{\MaxValD}{u}
\newcommand{\MaxValC}{v}
\newcommand{\MaxProb}{{\bar{q}}}
\newcommand{\MaxProbs}{{\bar{\mathbf q}}}
\newcommand{\Prob}{q}
\newcommand{\Price}{p}
\newcommand{\PPrice}{p}
\newcommand{\SPPrice}{{p^*}}
\newcommand{\TriDist}{\operatorname{Tri}}
\newcommand{\Val}{v}
\newcommand{\BenchRatio}{\operatorname{\textsc{Bound}}}
\newcommand{\BenchRatioI}{\BenchRatio_{I}}
\newcommand{\BenchRatioII}{\BenchRatio_{II}}
\newcommand{\Ratio}{\rho}
\newcommand{\RatioII}{\rho'}
\newcommand{\HFun}{\mathcal{V}}
\newcommand{\QFun}{\mathcal{Q}}
\newcommand{\lowestprice}{\PPrice}
\newcommand{\GFun}{G}
\newcommand{\WFun}{W}
\newcommand{\reserveinstanceapprox}{2}
\newcommand{\pricinginstanceapprox}{2.23}
\newcommand{\CDF}{F}
\newcommand{\BenchRev}{\textsc{ExAnteRev}}
\newcommand{\PriceRev}{\textsc{PriceRev}}
\newcommand{\OptPriceRev}{\textsc{OptPriceRev}}
\newcommand{\TriDistName}{triangular revenue curve}
\newcommand{\Instance}{\mathcal{I}}
\newcommand{\InstanceOf}[2]{\{{#1}_i\}_{i=1}^{\NumAgents}}
\newcommand{\TriInstanceOf}[3]{\{\TriDist({#1}_i, {#2}_i)\}_{i=1}^{#3}}
\newcommand{\RegInstanceSpace}{\textsc{Reg}}
\renewcommand{\RangeN}[1]{\RangeMN{1}{#1}}
\renewcommand{\secref}[1]{Section~\ref{#1}}
\title{Optimal Auctions vs.\@ Anonymous Pricing}
\author{Saeed Alaei, Jason Hartline, Rad Niazadeh, Emmanouil Pountourakis, and Yang Yuan}
\date{}
\begin{document}
\maketitle


\begin{abstract}
For selling a single item to agents with independent but
non-identically distributed values, the revenue optimal auction is
complex.  With respect to it, \citet{HR-09} showed that the
approximation factor of the second-price auction with an anonymous
reserve is between two and four.  We consider the more demanding problem of
approximating the revenue of the ex ante relaxation of the auction
problem by posting an anonymous price (while supplies last) and prove
that their worst-case ratio is $e$.  As a corollary, the upper-bound
of anonymous pricing or anonymous reserves versus the optimal auction
improves from four to $e$.  We conclude that, up to an $e$ factor,
discrimination and simultaneity are unimportant for driving revenue in
single-item auctions.
\end{abstract}

\thispagestyle{empty}
\pagenumbering{arabic} 
\section{Introduction}
\label{sec:introduction}
Methods from theoretical computer science are amplifying the
understanding of studied phenomena broadly.  A quintessential example
from auction theory is the following.  \citet{mye-81} is oft quoted as
showing that the second-price auction with a reserve price is revenue
optimal among all mechanisms for selling a single item.  This result
is touted as triumph for microeconomic theory as in practice
reserve-pricing mechanisms are widely prevalent, e.g., eBay's auction.
A key assumption in this result, though, is that the agents in the
auction are a priori identical; moreover, relaxation of this
assumption renders the theoretically optimal auction much more complex
and infrequently observed in practice.  Optimality of reserve pricing
with agent symmetry, thus, does not explain its prevalence broadly in
asymmetric settings, e.g., eBay's auction where agents can be
distinguished by public bidding history and reputation.  This paper
considers the approximate optimality of anonymous pricing and auctions
with anonymous reserves, e.g., eBay's buy-it-now pricing and auction,
and justifies their wide prevalence in asymmetric environments.

With two agents, anonymous reserve pricing is a tight two
approximation to the optimal auction; moreover, a surprising corollary
of a main result of \citet{HR-09} showed that the second-price auction
with anonymous reserves is generally no worse than a four
approximation.  The question of resolving the approximation factor
within $[2,4]$ has remained open for the last half decade.
Technically, (a) tight methods for understanding symmetric solutions
in asymmetric environments are undeveloped, and (b) the main method
for analyzing auction revenue is by Myerson's virtual values but for
this question virtual values give a mixed sign objective that renders
challenging the analysis of approximation.  The four approximation of
\citet{HR-09} employs the only known approach for resolving (b), an
approximate extension of the main theorem of \citet{BK-96}.  Our
approach directly takes on the challenge of (a) by giving a tight
analysis of anonymous pricing versus a standard upper bound (described
below); corollaries of this analysis are tightened upper bounds on
approximation of the optimal auction from four to $e \approx 2.718$
for both anonymous pricing and anonymous reserves.

In the Bayesian single-item auction problem agents' values are drawn
from a product distribution and expected revenue with respect to the
distribution is to be optimized.  Our development of the approximation
bound for anonymous pricing and reserves is based on the analysis of
four classes of mechanisms:
\begin{enumerate}
\item{\textbf{{Ex ante relaxation (a discriminatory pricing):}}} An ex ante pricing relaxes the feasibility
  constraint of the auction problem, from selling at most one item ex
  post, to selling at most one item in expectation over the draws of
  agents' values, i.e., ex ante.  Fixing a probability of serving a
  given agent the optimal ex ante mechanism offers this agent a posted
  price irrespective of the outcome of the mechanism for the other
  agents.  This relaxation was identified as a quantity of interest in
  \citet{CHK-07} and its study was refined by \citet{ala-11} and
  \citet{yan-11}.
  \vspace{-1mm}
\item{\textbf{{Auction}:}} An auction is any mechanism that maps
  values to outcome and payments subject to incentive and feasibility
  constraints.  The optimal auction was characterized by
  \citet{mye-81} and this characterization, though complex, is the
  foundation of modern auction theory.
  \vspace{-1mm}
\item{\textbf{{Anonymous reserve:}}} An anonymous reserve mechanisms
  is a variant of the second-price auction where bids below an
  anonymous reserve are discarded, the winner is the highest of the
  remaining agents, and the price charged is the maximum of the
  remaining agents' bids or the reserve if none other remain.
 \vspace{-1mm}
\item{\textbf{Anonymous pricing}}: An anonymous pricing mechanism posts an
  anonymous price and the first agent to arrive who is willing to pay
  this price will buy the item.
\end{enumerate}
\vspace{-1mm} For any distribution over agents' values the optimal
revenue attainable by each of these classes of mechanisms is
non-increasing with respect to the above ordering.  The final
inequality of optimal anonymous reserve exceeding optimal anonymous
pricing follows as with equal reserve and price, the former has only
higher revenue as competition drives a higher price.  The ex ante
relaxation is a quantity for analysis only, while the other problems
yield relevant mechanisms.

Our main technical theorem identifies the supremum over all instances
of the ratio of the revenues of the optimal ex ante relaxation to the
optimal anonymous pricing as the solution to an equation which
evaluates to $e$.  To our knowledge, this evaluation is not by any
standard progressions or limits that where previously known to
evaluate to $e$.  The theorem assumes that the distribution of agents'
values satisfies a standard {\em regularity} property that is
satisfied by common distributions, e.g., uniform, normal, exponential;
without this assumption we show that the approximation factor is $n$
for $n$-agent environments (see \cref{s:irregular}).



\begin{theorem}[\textbf {Anonymous pricing versus ex ante relaxation}]
\label{main-theorem}
\label{HQ:def}
For a single item environment with agents with independently (but
non-identically) distributed values from regular distributions, the
worst case approximation factor of anonymous pricing to the ex ante
relaxation is $\big(\HFun\left(\QFun^{-1}(1)\right)+1\big)$ which
evaluates to $e \approx 2.718$ where
\begin{align*}
 \HFun(\PPrice) &\triangleq \PPrice\cdot\ln\bigg(\frac{\PPrice^2}{\PPrice^2-1} \bigg),&
 \QFun(\PPrice) &\triangleq \int_{\PPrice}^{\infty} -\frac{\HFun'(\MaxValC)}{\MaxValC}\dif \MaxValC.
\end{align*}
\end{theorem}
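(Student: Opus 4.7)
The strategy is a reduction to a canonical worst-case family followed by a variational computation on an infinite-dimensional linear program. The first step is to identify the hardest distributions. For regular distributions, each agent's revenue curve $\monorevi_i(q) = q\cdot\CDF_i^{-1}(1-q)$ is concave in $q$, so the ex ante revenue is pinned down, up to the ex ante budget $\sum q_i\le 1$, by the monopoly points $(q_i^*,\monorevi_i^*)$ alone. I would argue that to worsen the ratio one may replace each agent's distribution by a triangular distribution with the same monopoly point: the ex ante revenue is unchanged whenever the ex ante budget is slack, while triangular revenue curves are the extreme concave completions and therefore minimize the demand $1-\CDF_i(p)$ at every price $p\neq p_i^*$, weakening anonymous pricing as much as possible.

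Second, I would pass to the continuous limit of infinitesimal agents and encode an instance by a nonnegative measure $\nu$ on monopoly prices. Under this encoding, the ex ante revenue becomes the linear functional $\int d\nu$, while the anonymous pricing revenue at price $\PPrice$ takes the form $\PPrice\cdot\bigl(1-\exp(-g(\PPrice))\bigr)$ with $g(\PPrice)=\int K(\PPrice,v)\,d\nu(v)$ for an explicit kernel $K$ determined by the triangular CDF. Normalizing so that the anonymous pricing revenue is at most $1$ for every $\PPrice$ yields an uncountable family of linear constraints, and the problem reduces to maximizing $\int d\nu$ over $\nu\geq 0$ subject to those constraints.

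Third, I would solve this infinite-dimensional linear program by constructing an extremal $\nu$ that saturates the anonymous pricing constraint on an active band $[\lowestprice,\infty)$ for some threshold $\lowestprice$. Differentiating the tight equality in $\PPrice$ converts it into an ODE whose solution has density proportional to $-\HFun'(v)/v$; integrating recovers $\QFun(\lowestprice)=\int_{\lowestprice}^{\infty}-\HFun'(v)/v\,\dif v$ as the total ex ante mass in the active band, with the form $\HFun(\PPrice)=\PPrice\ln(\PPrice^{2}/(\PPrice^{2}-1))$ emerging as the log-demand that the triangular kernel must deliver to pin the pricing revenue at $1$. The normalization that total ex ante mass equals $1$ forces $\lowestprice=\QFun^{-1}(1)$, and the ex ante revenue decomposes into $\HFun(\lowestprice)$ from mass inside the active band plus an additional $1$ from the mass that realizes the anonymous pricing revenue itself, giving the stated ratio.

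The main obstacle is the first step: making the reduction to triangular distributions rigorous while correctly handling the ex ante budget $\sum q_i^*\leq 1$, which behaves qualitatively differently once the monopoly quantiles saturate. A secondary difficulty is the evaluation $\HFun(\QFun^{-1}(1))+1=e$, which does not fit any standard series or limit for $e$ and presumably requires a bespoke change of variables in the integral defining $\QFun$ to be reduced to an exponential identity.
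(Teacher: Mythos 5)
Your broad strategy --- reduce to triangular revenue curves, pass to a continuum of infinitesimal agents, and solve the resulting infinite-dimensional optimization by saturating the anonymous-pricing constraint on a band $[\lowestprice,\infty)$ --- is essentially the paper's. Two substantive issues, though.

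First (and you flag this yourself): the triangle apex must be placed at the \emph{ex-ante-optimal} quantile $\MaxProb_i$ of each agent in the ex ante relaxation, not at the monopoly quantile. Since the ex ante budget $\sum_i \MaxProb_i \le 1$ may already bind, these can differ. The paper fixes $\MaxProbs$ to be an optimal solution of the ex ante program \eqref{eq:bench_rev}, sets $\MaxVal_i = \Rev_i(\MaxProb_i)/\MaxProb_i$, and observes that (i) the replacement preserves $\BenchRev$ because $\hat{\Rev}_i(\MaxProb_i) = \Rev_i(\MaxProb_i)$ and $\hat{\Rev}_i \le \Rev_i$ by concavity, and (ii) it only weakens anonymous pricing because the demand curve only drops. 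With monopoly quantiles this two-sided argument would not go through once the budget is tight.

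Second, and this is the real gap: your continuum encoding with a \emph{linear} kernel $g(\PPrice)=\int K(\PPrice,v)\,\dif\nu(v)$ and $K(\PPrice,v)=\mathds{1}[v\ge\PPrice]/\PPrice$ does not produce $\HFun(\PPrice)=\PPrice\ln\frac{\PPrice^2}{\PPrice^2-1}$; it produces the different function $\PPrice\ln\frac{\PPrice}{\PPrice-1}$. Saturating $\PPrice(1-e^{-g(\PPrice)})\le 1$ over a pure continuum gives $\int_{\PPrice}^{\infty}\dif\nu \le \PPrice\ln\frac{\PPrice}{\PPrice-1}$, not $\le\HFun(\PPrice)$, and there is no mechanism in your sketch that produces the ``$+1$.'' The worst case in the paper is \emph{not} a pure continuum: it is a continuum together with a single discrete ``top'' agent whose ex ante contribution $\MaxVal_1\MaxProb_1$ is pinned to $1$. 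That agent's contribution to $-\ln\prod_i\CDF_i(\PPrice)$ is $\ln(1+1/\PPrice)$, not the infinitesimal linearization $1/\PPrice$, and subtracting it from the budget $\ln\frac{\PPrice}{\PPrice-1}$ is exactly what produces $\ln\frac{\PPrice^2}{\PPrice^2-1}$; the multiplicative factor $\PPrice$ in $\HFun$ then comes from the inequality $\ln(1+b/a)\ge\frac{1}{a}\ln(1+b)$ applied to all agents \emph{except} the top one. Concretely, your constraint family should be $\int_{\PPrice}^{\infty}\dif\nu\le\HFun(\PPrice)$ with objective $1+\int\dif\nu$, not $\int_{\PPrice}^{\infty}\dif\nu\le\PPrice\ln\frac{\PPrice}{\PPrice-1}$ with objective $\int\dif\nu$. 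Moreover, validating the continuum program as an \emph{upper} bound for finite regular instances (not merely as a limiting construction) is where the heavy lifting sits: the paper proves monotonicity and a ``twist'' inequality (Lemmas~\ref{res:mono}--\ref{res:twist}) showing that replacing a discrete agent by the continuum piece it spans creates more slack in the budget constraint than it costs in objective, so the continuum dominates every finite assignment; your sketch treats this as a routine LP step. Finally, on evaluating $\HFun(\QFun^{-1}(1))+1=e$: the paper itself only states this numerically, so no bespoke identity is required of you either.
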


Intuition for the theorem, as given by functions $\HFun(\cdot)$ and
$\QFun(\cdot)$, and its proof is as follows.  We write a mathematical
program to maximize the worst case approximation factor; a
tight-in-the-limit continuous relaxation of this program gives the
objective $1 + \HFun(\PPrice)$ subject to $\QFun(\PPrice) \leq 1$
which has the following interpretation. There is a continuum of agents
and each agent value distribution is given by a pointmass at a value
with some probability (and then a continuous distribution below the
pointmass to minimally satisfy the regularity property).  The function
$\HFun(\PPrice)$ is the expected pointmass value from agents with
pointmass value at least price $\PPrice$;\footnote{$\HFun(\cdot)$
  excludes the contribution from the ``highest valued agent'' which is
  1; hence the objective $1+\HFun(\PPrice)$.}  $\QFun(\PPrice)$ is the
expected number of these agents to realize their pointmass value.  The
optimal $\SPPrice$ meets the constraint with equality, i.e.,
$\QFun(\SPPrice) = 1$.

Corollaries of this theorem are the improved upper bounds by $e$ (from
$4$) on the worst-case approximation factor of anonymous reserves and
anonymous pricing with respect to the optimal auction.  On the worst
case instance of the theorem, however, the actual approximation factor
of anonymous reserve and anonymous pricing are
$\reserveinstanceapprox$ and $\pricinginstanceapprox$, respectively
(see \cref{appendix:sim}).  The latter improves on the known lower
bound of two; the former does not improve the known lower bound.  The
question of refining our understanding of the revenue of anonymous
reserves on worst-case instances and identifying a tight bound with
respect to the optimal auction remains open.  See
Figure~\ref{fig:gapbetweenmechanism}.

\begin{figure}[t]
\center
\begin{tikzpicture}[scale=1,transform shape]
\draw (3.5,0.5) node[font=\Large] {$\geq$};
\draw (7.5,0.5) node[font=\Large] {$\geq$};
\draw (11.5,0.5) node[font=\Large] {$\geq$};
\draw (0,0) rectangle (3,1);
\draw (1.5,0.5) node {Ex-ante Relax};
\draw (4,0) rectangle (7, 1);
\draw (5.5,0.5) node {Optimal Auction};
\draw (8,0) rectangle (11, 1);
\draw (9.5,0.5) node {$2^{nd}$Price+Resrv};
\draw (12,0) rectangle (15, 1);
\draw (13.5,0.5) node {Posted Price};
\draw (5.7,1) --++(0,0.3)--++(3.6,0)--++(0,-0.3);
\draw (7.5,1.6) node {$[2,e^*]$};
\draw (1.5,1) --++(0,1)--++(12,0)--++(0,-1);
\draw (7.5,2.25) node {$e^*$};
\draw (7,0) --++(1,-0.3)--++(5.3,0)--++(0,0.3);
\draw (9.5,-0.6) node {$[\pricinginstanceapprox^*,e^*]$};
\draw (1.7,0) --++(0,-0.3)--++(5.3,0)--++(1,0.3);
\draw (5.5,-0.6) node {$[2,e^*]$};
\end{tikzpicture}
\caption{Revenue gap between mechanisms of study; $^*$ denotes new bounds.}
\label{fig:gapbetweenmechanism}
\end{figure}
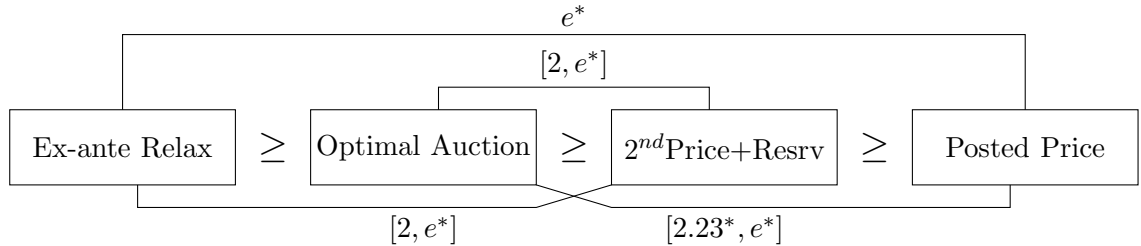

The corollary relating anonymous pricing to the optimal auction has
implications on mechanism design for agents with multi-dimensional
preferences \citep[e.g., for multiple items; cf.][]{CHK-07}.
Understanding these problems, though there has been considerable
recent progress, remains an area with fundamental open questions for
optimization and approximation.  Recently, \citet{HH-14} proved the
optimality of uniform pricing for a single unit-demand buyer with
values drawn from a large family of item-symmetric distributions.  An
immediate corollary of our anonymous pricing result is that, for a
unit-demand buyer with values drawn from an asymmetric product
distribution, uniform pricing is an $e$ approximation (improved from
four) to the optimal non-uniform pricing \citep[cf.][]{CD-11} and, via
a result of \citet{CMS-10}, a $2e$ approximation to the optimal
pricing over lotteries (i.e., randomized allocations, improved from
eight).  Further refinement of this latter bound remains an important
open question.  These approximation results for a single agent
automatically improve the approximation bounds for related multi-agent
mechanism design problems based on uniform pricing, e.g., from
\citet{AFHH-13}.  As one example, for selling an object that can be
configured on sale in one of $m$ configurations to $n$ agents with
independently (but non-identically) distributed values for each
configuration (also satisfying a regularity property), the
second-price auction with an anonymous reserve that configures the
object as the winner most prefers is a $2e^2 \approx 14.8$
approximation to the optimal auction (which is sometimes randomized;
improved from 32).

This work is part of a central area of study at the intersection of
computer science and economics that aims to quantify the performance
of simple, practical mechanisms versus optimal mechanisms (see
\citealp{har-13}, for a survey).  Immediately related results in this
area fit in to three broad categories, (i) anonymous and
discriminatory reserve pricing \citep{HR-09}, (ii) while-supplies-last
posted pricing \citep{CHMS-10}, and (iii) increased competition with
symmetric \citep{BK-96} and asymmetric agents \citep{HR-09}.  The four
approximation of \citet{HR-09} for anonymous reserve pricing is a
corollary of their result for (iii) on (i).  
In comparison this paper considers
(ii), foremost, and obtain a lower bound for (i) as a corollary.

The field of algorithmic mechanism design contains many questions of
constant approximation where tight bounds are not known.  A key
challenge of these problems is that the worst-case bounds are not
given by small instances, e.g., $n=2$ agents, but are instead
approached in the limit with $n$.  The field lacks general methods for
analysis of this kind of problem.  Our approach is similar to the
recent successful approach of \citet{CGL-14} which identified the
prior-free approximation ratio for digital good auctions as 2.42
(matching the lower bound of \citealp{GHKSW-06}). The approach writes
the approximation ratio as the value of a mathematical program.  In
both our problem and that of \citet{CGL-14} the worst-case instance is
attained in the limit with the number $n$ of agents.  
With two success stories for this approach in algorithmic mechanism
design, we are optimistic about the development of a set of tools for
analyzing worst case approximation factors and that these tools will
be useful for making progress on many other similar open questions in
the area.

\section{Preliminaries and Notations}
\label{sec:prelim}

\paragraph{Revenue curves and regular instances.} 
Each agent $i$ has value drawn from a distribution $\CDF_i$ with
cumulative distribution function (CDF) denoted by $\CDF_i(\cdot)$.
The {\em revenue curve} $\Rev_i(\Prob) =
\Prob\cdot\CDF^{-1}_i(1-\Prob)$ gives the expected revenue obtained by
selling an item to agent $i$ with probability exactly $\Prob$, i.e.,
by posting price $\CDF^{-1}(1-\Prob)$.  The agent is {\em regular} if
its revenue curve $\Rev_i(\Prob)$ is concave in $\Prob$.  An
$\NumAgents$-agent instance $\Instance=\InstanceOf{\CDF}{\NumAgents}$
is regular if each agent's distribution is regular.  The family of all
regular instances for all $n\geq 1$ is denoted by $\RegInstanceSpace$.

\paragraph{Anonymous pricings.}
An anonymous pricing is a mechanism that posts a price $\Price$ that
is bought by an arbitrary agent whose value is at least the posted
price (if one exists).  The expected revenue of the anonymous pricing
$\Price$ for instance $\Instance=\InstanceOf{\CDF}{\NumAgetns}$ is
\begin{align}
    \PriceRev(\Instance, \Price) &\triangleq \Price\cdot \Big(1-\prod\nolimits_i \CDF_i(\Price)\Big).
\end{align}
The expected revenue of the optimal anonymous pricing is
\begin{align}
    \OptPriceRev(\Instance) &\triangleq \max_{\Price \in \PosReals}\ \PriceRev(\Instance, \Price).
\end{align}

\paragraph{Ex ante relaxation and optimal auctions.}
The revenue of the {\em ex ante relaxation}, which allocates to one
agent in expectation, gives an upper bound on the revenue of the
optimal auction.  For any instance $\Instance$, it can be easily
expressed in terms of the revenue curves of the agents.
\begin{alignat}{4}
    \BenchRev(\Instance)\triangleq&&\max&\quad & &\sum\nolimits_{i=1}^{\NumAgents} \Rev_i(\Prob_i) \label{eq:bench_rev} \\
    &&\text{subject to}& & &\sum\nolimits_{i=1}^{\NumAgents} \Prob_i \le 1 \notag \\
    &&               & & &\Prob_i \ge 0 & \quad&\forall i \in \RangeN{\NumAgents}  \notag.
\end{alignat}
Note: We would prefer to compare the performance of anonymous pricing
directly to the optimal auction of \citet{mye-81}; however, the
standard formulation of the expected revenue of the optimal mechanism
is difficult to analyze relative to the optimal anonymous pricing.  

\paragraph{Worst-case approximation ratio.}
The main task of this paper is to analyze the worst case ratio of the
revenue of the ex ante relaxation to the revenue of the optimal
anonymous pricing over all regular instances, that is
\begin{align}
    \Ratio \triangleq \sup_{\Instance\in\RegInstanceSpace}
    \frac{\BenchRev(\Instance)}{\OptPriceRev(\Instance)}
    \tag{P1}\label{eq:ratio}~,
\end{align}
where $\RegInstanceSpace$ denotes the space of all regular instances.

\paragraph{Triangular revenue curve instances.}
We will show that distributions with triangular-shaped revenue curves give worst case
instances for program \eqref{eq:ratio}.  A \emph{\TriDistName}
distribution, denoted $\TriDist(\MaxVal,\MaxProb)$ with parameters
$\MaxVal \in \Rangeab{0}{\infty}$ and $\MaxProb \in \RangeAB{0}{1}$,
has CDF given by
 \begin{align}
    \CDF(\PPrice) 
        &=  \begin{cases}
            1 & \PPrice \geq \MaxVal\\
            \frac{\PPrice\cdot(1-\MaxProb)}{\PPrice\cdot(1-\MaxProb)+\MaxVal\MaxProb} &  0 \le \PPrice < \MaxVal
            \end{cases} &&\forall \PPrice \in \PosReals.
\end{align}
The revenue curve corresponding to the above distribution has the form
of a triangle with vertices at $(0,0)$, $(\MaxProb,\MaxVal\MaxProb)$,
and $(1,0)$ as illustrated in \cref{fig:tri_rev_curve}; the revenue
curve's concavity implies that the distribution is regular. Note that
the CDF is discontinuous at $\MaxVal$ which corresponds to a pointmass
of $\MaxProb$ at value $\MaxVal$.  A {\em triangular revenue curve instance}
is given by $\Instance=\TriInstanceOf{\MaxVal}{\MaxProb}{\NumAgents}$
with $\sum_{i=1}^{\NumAgents} \bar{q}_i\leq 1$; with respect to it the revenue of
anonymous pricing $\Price$ and the ex ante relaxation are
given by
\begin{align}
  \PriceRev(\Instance, p) &= \PPrice\cdot\bigg(1-\prod\nolimits_{i : \MaxVal_i
      \ge \PPrice} \Big(1+\frac{\MaxVal_i\MaxProb_i}{\PPrice\cdot(1-\MaxProb_i)}\Big)^{-1} \bigg), \label{eq:tri_price_rev} \\
  \BenchRev(\Instance) &= \sum\nolimits_{i=1}^\NumAgents \MaxVal_i\MaxProb_i.   \label{eq:tri_bench_rev}
\end{align}

\begin{figure}[t]
\centering
\begin{minipage}[c]{3in}
\centering
\tikzstyle{letterlabel}=[thick,font=\fontsize{6}{6},color=black!70!blue]
\begin{tikzpicture}[scale=0.99, every node/.style={transform shape}]
\draw[->] (0,0)--+ (0,5);
\draw[->] (0,0)--+ (6,0);
\draw (-0.8,4.5) node {Revenue};
\draw (6.8,-0.5) node {Sale Probability};
\draw (1.5,1.2) node {slope $\MaxVal$};
\draw[dotted,<-] (0.8,1.95) --+ (0.5,-0.5);
\draw (-0.2,-0.2) node {$0$};
\draw (5.2,-0.2) node {$1$};
\draw (-0.3,4) node {$\MaxVal \MaxProb$};
\draw (1.5,-0.2) node {$\MaxProb$};
\draw (4.7,2) node {slope $\frac{\MaxVal\MaxProb}{1-\MaxProb}$};
\draw[dotted,<-] (4.2,1.2) --+ (0.5,0.5);
\draw (0,0) --+ (1.5,4) --+ (5,0);
\draw[dotted] (0,4)--++(1.5,0)--++(0,-4);
\end{tikzpicture}
\end{minipage}
\caption{Revenue curve of distribution $\TriDist(\MaxVal,\MaxProb)$.\label{fig:tri_rev_curve}}
\end{figure}

\section{Upper-Bound Analysis}
\label{sec:upperbound}%

Program \eqref{eq:ratio} defines a tight upper bound on the ratio,
denoted by $\Ratio$, of the revenue of the ex ante relaxation to the
revenue of the optimal anonymous pricing.  This program can be thought
of as a continuous optimization problem over regular distributions
with the objective of maximizing the aforementioned ratio.  The
current \lcnamecref{sec:upperbound} shows the upper bound of
\cref{main-theorem} while \secref{sec:lowerbound} shows tightness.

\paragraph{Overview of the analysis.}
By normalizing the optimal anonymous pricing revenue to be one,
$\eqref{eq:ratio}$ is equivalent to the following program:
\begin{alignat}{4}
    \Ratio=&& \sup_{\Instance\in\RegInstanceSpace}&\quad & &\BenchRev(\Instance) & \qquad & \tag{P2} \label{eq:alt_bound}\\
    &&\text{subject to}& & &\PriceRev(\Instance, \PPrice) \le 1 &&\forall \PPrice \ge 1. \tag{P2.1}\label{eq:price_cons}
\end{alignat}
Note that $\PriceRev(\Instance, \PPrice)< 1$ for $\PPrice\in[0,1)$, so it is safe to assume prices are in range $[1,+\infty)$. We show that for any fixed $\NumAgents$ the supremum of this program
is approached even when restricting to triangular revenue curve
instances, i.e., ones of the form
$\{\TriDist(\MaxVal_i,\MaxProb_i)\}_{i=1}^\NumAgents$ with $\sum_i
\MaxProb_i \le 1$ as defined in \secref{sec:prelim}.  Consequently, the problem is reduced to a discrete
optimization problem over variables
$\MaxVals\triangleq(\MaxVal_1,\ldots,\MaxVal_n)$ and
$\MaxProbs\triangleq(\MaxProb_1,\ldots,\MaxProb_n)$. An \emph{assignment} for this optimization problem refers to a pair $(\MaxVals,\MaxProbs)$. This optimization
problem is still of infinite dimension because $\NumAgents$ is itself
a variable. It also turns out to be highly non-convex. Re-index
$\MaxVals$ such that $\MaxVal_1\ge \ldots \ge \MaxVal_\NumAgents$. We
will show that, for any fixed $\NumAgents$,
inequality~\eqref{eq:price_cons} can be assumed without loss of
generality to be tight for all $\PPrice \in
\{\MaxVal_1,\ldots,\MaxVal_\NumAgents\}$; otherwise an instance for which at least one of these constraints is not tight
could be modified to make all these constraints tight while improving
the objective. Thus,
\begin{align}
  \PriceRev(\TriInstanceOf{\MaxVal}{\MaxProb}{\NumAgents}, \MaxVal_k) &= 1  && \forall k \in \RangeN{\NumAgents}.
  \label{eq:tight}
\end{align}
Observe that for each $k \in \RangeN{\NumAgents}$ the left hand side
of the above equation only depends on the first $k$ agents, because
the the valuations of the rest of the agents are always below
$\MaxVal_k$. Consequently once $\MaxVal_1,\ldots,\MaxVal_\NumAgents$
are fixed, we can compute $\MaxProb_1,\ldots,\MaxProb_\NumAgents$ by
solving equation \eqref{eq:tight} for $k \in \RangeN{\NumAgents}$ and
using forward substitution. Unfortunately, the resulting formulation
of $\MaxProb_k$ in terms of $\MaxVal_1, \ldots, \MaxVal_k$ is
analytically intractable for $k \ge 2$. To work around this
intractability issue, we relax inequality \eqref{eq:price_cons} in
such a way that it leads to a tractable formulation of $\MaxProbs$ in
terms of $\MaxVals$. We also show that the relaxed inequality is tight
which implies the value of the relaxed program is equal to that of the
original program. Finally, we show that the supremum of the relaxed
program is attained when $\NumAgents \to \infty$, and roughly speaking
the instance converges to a continuum of infinitesimal agents with
triangular revenue curve distributions. For this continuum of agents,
$\Ratio$ is given simply by the optimization of $\PPrice$ in the
objective $1 + \HFun(\PPrice)$ subject to the constraint $\QFun(\PPrice)
\leq 1$ for the two functions $\HFun(\cdot)$ and $\QFun(\cdot)$ given
in the statement of \cref{main-theorem}.


\paragraph{Reduction to triangular revenue curve instances.}

We begin by showing that without loss of generality we can restrict
program \eqref{eq:alt_bound} to triangle revenue curve instances.

\begin{lemma}
\label{res:tri_dist}%
The supremum of program \eqref{eq:alt_bound} is approached by triangle
revenue curve instances, i.e., of the form
$\hat{\Instance}=\TriInstanceOf{\MaxVal}{\MaxProb}{\NumAgents}$ with
$\sum_{i=1}^\NumAgents{\MaxProb_i} \le 1$.
\end{lemma}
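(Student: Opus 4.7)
The plan is to show that given any regular instance $\Instance$ which is feasible for \eqref{eq:alt_bound}, one can construct a triangular revenue curve instance $\hat{\Instance}$ that is also feasible and has at least as large an objective. Since every triangular distribution is regular, the reverse inclusion is trivial, so this will imply that the supremum of \eqref{eq:alt_bound} over regular instances equals its supremum over triangular instances, as claimed.

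The construction is the natural one. Let $(q_i^*)_{i=1}^{\NumAgents}$ be an optimal allocation of the ex ante LP \eqref{eq:bench_rev} for $\Instance$, so $\sum_i q_i^* \le 1$ and $\BenchRev(\Instance) = \sum_i \Rev_i(q_i^*)$; discard any agent with $q_i^* = 0$; and for the remaining agents set $\MaxProb_i \triangleq q_i^*$ and $\MaxVal_i \triangleq \Rev_i(q_i^*)/q_i^*$, defining $\hat{\Instance} = \TriInstanceOf{\MaxVal}{\MaxProb}{\NumAgents}$. The budget $\sum_i \MaxProb_i \le 1$ is then immediate, and because each triangle's revenue curve peaks at $(\MaxProb_i, \MaxVal_i\MaxProb_i) = (q_i^*, \Rev_i(q_i^*))$, evaluating the LP for $\hat{\Instance}$ at the same $(q_i^*)$ already yields $\BenchRev(\hat{\Instance}) \ge \sum_i \MaxVal_i \MaxProb_i = \BenchRev(\Instance)$, consistent with \eqref{eq:tri_bench_rev}.

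The substantive step is to show $\PriceRev(\hat{\Instance}, \PPrice) \le \PriceRev(\Instance, \PPrice)$ for every price $\PPrice$, which transfers \eqref{eq:price_cons} from $\Instance$ to $\hat{\Instance}$. This rests on a single geometric observation: by regularity, $\Rev_i$ is concave on $[0,1]$ with $\Rev_i(0) = 0$ and $\Rev_i(1) \ge 0$, so the piecewise-linear triangular revenue curve $\Rev_i^{\mathrm{tri}}$ through $(0,0)$, $(q_i^*, \Rev_i(q_i^*))$, and $(1,0)$ lies pointwise below $\Rev_i$. Since $\Rev_i(q)/q = \CDF_i^{-1}(1-q)$ is the top-$q$-quantile of $\CDF_i$, and likewise for the triangle via the generalized inverse (to accommodate the pointmass at $\MaxVal_i$), this inequality is equivalent to $\CDF_i^{\mathrm{tri}}(\PPrice) \ge \CDF_i(\PPrice)$ for every $\PPrice$, i.e., $\Instance$ first-order stochastically dominates $\hat{\Instance}$ coordinatewise. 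Multiplying over $i$ and comparing the two pricing-revenue formulas then yields the desired inequality. I expect the main obstacle to be the careful bookkeeping in this stochastic-dominance step, especially handling the atom of the triangular distribution through the generalized inverse; the rest of the argument amounts to a few routine lines.
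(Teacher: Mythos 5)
Your proposal is correct and matches the paper's proof essentially step for step: fix an optimal ex ante allocation $(q_i^*)$, replace each $\CDF_i$ by the triangle $\TriDist(\Rev_i(q_i^*)/q_i^*,\, q_i^*)$, note that concavity puts the triangular revenue curve pointwise below $\Rev_i$ (so $\BenchRev$ is preserved and, by the quantile/price-line correspondence, each agent's sale probability at any $p$ only drops, hence $\PriceRev$ only drops). The only cosmetic difference is that you phrase the last step as coordinatewise first-order stochastic dominance while the paper states it via the intersection of the revenue curve with the price line through the origin; these are the same observation.
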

\begin{proof}

We will show that for any regular instance
$\Instance=\InstanceOf{\CDF}{\NumAgents}$, there exists a
corresponding instance
$\hat{\Instance}=\TriInstanceOf{\MaxVal}{\MaxProb}{\NumAgents}$ with
$\sum_{i=1}^\NumAgents{\MaxProb_i} \le 1$ yielding the same optimal ex
ante revenue and (weakly) smaller expected revenue from the optimal
anonymous price.


Let $\MaxProbs$ be an optimal assignment for the ex ante relaxation
program~\eqref{eq:bench_rev} that computes $\BenchRev(\Instance)$. Set
$\MaxVal_i\leftarrow \Rev_i(\MaxProb_i)/\MaxProb_i$ for each $i\in
\RangeN{\NumAgents}$, where $\Rev_i$ is the revenue curve of
$\CDF_i$. We show changing agent $i$'s valuation distribution to
$\TriDist(\MaxVal_i,\MaxProb_i)$ can only decrease the revenue of any
anonymous pricing ($\PriceRev$) while preserving the revenue of the ex ante
relaxation ($\BenchRev$), which implies the statement of the
\lcnamecref{res:tri_dist}.

Let $\hat{\Rev}_i$ be the revenue curve of $\TriDist(\MaxVal_i,\MaxProb_i)$. Observe that the change of distributions does
not affect $\BenchRev$ because $\hat{\Rev}_i(\MaxProb_i)=\Rev_i(\MaxProb_i)$ for all $i\in \RangeN{\NumAgents}$, and
$\hat{\Rev}_i$ is a lower bound on $\Rev_i$ elsewhere as $\Rev_i$ is concave (see \cref{fig:tri_dist_transform}). Therefore the replacement
preserves the optimal value of convex program of \cref{eq:bench_rev} which implies
$\BenchRev(\hat{\Instance})=\BenchRev(\Instance)$.
%
%
%
%

\begin{figure}[ht]
  \begin{minipage}[b]{0.45\linewidth}
    \centering
\tikzstyle{letterlabel}=[thick,font=\fontsize{6}{6},color=black!70!blue]
\begin{tikzpicture}[scale=0.99, every node/.style={transform shape}]
\draw[->] (0,0)--+ (0,5);
\draw[->] (0,0)--+ (6,0);
\draw (-0.8,4.8) node {Revenue};
\draw (4.5,-0.7) node {Sale Probability};
\draw (-0.2,-0.2) node {$0$};

\draw (5.2,-0.2) node {$1$};
\draw (-0.6,3.85) node {$\Rev_i(\MaxProb_i)$};
\draw (1.3,-0.2) node {$\MaxProb_i$};
\draw (4.6, 4.6) node {$\Rev_i$: black, $\hat{\Rev}_i$: red.};
\draw[dotted] (1.3,0)--++(0,3.85)--++(-1.3,0);
\draw  plot [smooth] coordinates {(0,0)(1,3.5)(2,4)(3,3.4) (5,0)};

\draw [red,dashed] (0,0) --+ (1.3,3.85) --+ (5,0);
\end{tikzpicture}
\caption{Replacing regular distribution $\CDF_i$ with triangular revenue curve distribution $\TriDist(\MaxVal_i,\MaxProb_i)$.\label{fig:tri_dist_transform}}
    \vspace{4ex}
  \end{minipage}
  \hspace{0.1\linewidth}
  \begin{minipage}[b]{0.45\linewidth}
    \centering
\begin{tikzpicture}[scale=1,transform shape]
\draw[->] (0,0)--+ (0,5);
\draw[->] (0,0)--(6,0);
\draw (5,-0.3) node {$1$};
\draw (4.5,-0.7) node {Sale Probability};
\draw (-0.2,-0.2) node {$0$};
\draw (0,0) to[out=80,in=180] (2,4) to[out=0,in=100] (5,0);
\draw[red,dashed] (0,0) -- (3,5);
\draw (3.2,5.2) node {$\Price$};
\draw[dotted] (2.38,3.95) -- + (0,-3.95);
\draw[dotted] (2.38,3.95)--+(-2.38,0);
\draw (-0.4,3.95) node {$\Rev(\Prob)$};
\draw (-0.8,4.8) node {Revenue};
\draw (2.38, -0.3) node {$\Prob$};
\end{tikzpicture}
\caption{Intersection of revenue curve and price line $\Price$.\label{fig:rev_curve}}\vspace{0.5cm}
    \vspace{4ex}
  \end{minipage}
\end{figure}

Next, we show the replacement may only decrease the value of $\PriceRev(\Price)$ at any $\Price>0$. Fix a price
$\Price$, and consider the price line corresponding to $\Price$, that is, the line with slope $\Price$ passing through
the origin (see \cref{fig:rev_curve}). Observe that the probability of agent $i$'s valuation being above $\Price$ is
equal to the $\Prob$ at which $\Rev_i(\Prob)$ intersects price line $\Price$. Given that $\hat{\Rev}_i$ is a lower
bound on $\Rev_i$ everywhere, the replacement may only  decrease the probability of agent $i$'s valuation being above
$\Price$. Consequently, given that agents' valuations are distributed independently, the replacement may only decrease
the revenue from sale at any anonymous price $\Price$, which implies $\PriceRev(\hat{\Instance}) \le
\PriceRev(\Instance)$.
\end{proof}


Combining \cref{res:tri_dist} with the formulation of $\PriceRev$ and $\BenchRev$ from
\cref{eq:tri_bench_rev,eq:tri_price_rev}  yields the following non-convex program for computing $\Ratio$:

\begin{alignat}{4}
\Ratio= && \sup_{\NumAgents\in\PosInts,\MaxVals,\MaxProbs}    & \qquad &  \sum_{i=1}^\NumAgents \MaxVal_i\MaxProb_i \qquad\qquad\qquad & & \qquad & \tag{P3} \label{eq:tri_ratio} \\
&&\text{subject to}   & & \PPrice\cdot\left(1-\prod_{i : \MaxVal_i \ge \PPrice} \frac{1}{1+\frac{\MaxVal_i\MaxProb_i}{\PPrice\cdot(1-\MaxProb_i)}} \right) &\le 1,
                        & & \forall \PPrice \geq 1 \tag{P3.1}
                        \label{eq:tri_ratio:price_cons} \\
&&                    & &         \sum_{i=1}^{\NumAgents} \MaxProb_i &\le 1 \notag \\
&&                    & &         \MaxVal_i &\ge 0, \MaxProb_i \ge 0 & &
\forall i \in \RangeN{\NumAgents}. \notag
\end{alignat}

\paragraph{Relaxations and canonical assignments.}
In this section we find a relaxation of program~\eqref{eq:tri_ratio} where
the corresponding {\em pricing revenue constraint}
\eqref{eq:tri_ratio:price_cons} is tight for all $\PPrice \in
\{\MaxVal_1,\ldots,\MaxVal_\NumAgents\}$ and can thus be written as a program
on variables $\MaxVals$ alone (i.e., by solving for the appropriate
$\MaxProbs$ in terms of $\MaxVals$).  To simplify the solution of $\MaxProbs$
in terms of $\MaxVals$, we will first make a series of relaxations to the
pricing revenue constraint \eqref{eq:tri_ratio:price_cons}. We will point which of these relaxations are obviously tight, the others we will prove to be tight in the limit with the number of agents
$\NumAgents$ in \cref{sec:lowerbound}, where we derive the matching lower bound. 

These
relaxations will turn out to be tight in the limit 


\Cref{res:ratio2} formalizes these relaxations as sketched below, the formal proof is given in \cref{proof:res:hbound}.
First, observe that the pricing revenue constraint
\eqref{eq:tri_ratio:price_cons} can be rearranged as
\begin{align*}
  \prod_{i : \MaxVal_i \ge \PPrice} \left(1+\frac{\MaxVal_i \MaxProb_i}{\PPrice\cdot(1-\MaxProb_i)}\right) &\le
      \left(\frac{\PPrice}{\PPrice-1}\right) & & \forall \PPrice \geq 1.
      \end{align*}
The first relaxation drops the constraint on $\PPrice \not\in
\{\MaxVal_1, \ldots, \MaxVal_\NumAgents\}$; this is without loss as
the optimal anonymous price is always in $\{\MaxVal_1, \ldots,
\MaxVal_\NumAgents\}$.  We re-index such that $\MaxVal_1 \ge \cdots
\ge \MaxVal_\NumAgents$ and rephrase the relaxed constraint as
    \begin{align*}
  \prod_{i=1}^k\left(1+\frac{\MaxVal_i \MaxProb_i}{\MaxVal_k\cdot(1-\MaxProb_i)}\right) &\le
      \left(\frac{\MaxVal_k}{\MaxVal_k-1}\right) & & \forall k \in \RangeN{\NumAgents}.
     \end{align*}
As the second relaxation we drop the term $(1-\MaxProb_i)$ from the denominator of the left hand side and
    take the logarithm of both sides to get
    \begin{align*}
  \sum_{i=1}^k \ln\left(1+\frac{\MaxVal_i \MaxProb_i}{\MaxVal_k}\right) &\le
      \ln\left(\frac{\MaxVal_k}{\MaxVal_k-1}\right) & & \forall k \in \RangeN{\NumAgents}.
     \end{align*}
As the third relaxation we upper-bound $\ln\left(1+\frac{\MaxVal_i \MaxProb_i}{\MaxVal_k}\right)$ by
  $\frac{1}{\MaxVal_k}\ln(1+\MaxVal_i\MaxProb_i)$ for $i\ge 2$. Rearranging gives
 \begin{align*}
     \sum_{i=2}^k \ln\left(1+\MaxVal_i \MaxProb_i\right) &\le
    \MaxVal_k\ln\left(\frac{\MaxVal_k^2}{(\MaxVal_k-1)(\MaxVal_k+\MaxVal_1\MaxProb_1)}\right) & & \forall k \in \RangeMN{2}{\NumAgents}.
\end{align*}
The previous relaxation uses the fact that $\frac{1}{a}\ln(1+b) \le \ln(1+\frac{b}{a})$ for all $a \ge 1,b \ge 0$.  In the proof, we will
also show that $\MaxVal_1\MaxProb_1$ can be replaced with $1$ both in the above constraint and in the objective function
without loss of generality. Putting everything together, we will obtain the following program as a relaxation of program
\eqref{eq:tri_ratio}.

\begin{alignat}{4}
\RatioII= &&\sup_{\NumAgents\in \PosInts,\MaxVals,\MaxProbs} & \qquad &  1+\sum_{i=2}^{\NumAgents} \MaxVal_i \MaxProb_i
    & & \qquad & \tag{P4}\label{eq:tri_ratio2} \\
&&\text{subject to}   & &         \sum_{i=2}^k \ln\left(1+\MaxVal_i \MaxProb_i\right) &\le \HFun(\MaxVal_k)
    & & \forall k \in \RangeMN{2}{\NumAgents} \label{firstconstraintP4} \tag{P4.1}\\
&&                    & &         \sum_{i=2}^{\NumAgents} \MaxProb_i &\le 1 \notag \\
&&                    & &         \MaxVal_{i+1} &\le \MaxVal_i,
    & & \forall i \in \RangeMN{2}{\NumAgents-1} \notag \\
&&                    & &         \MaxVal_i &\ge 0, \MaxProb_i \ge 0  & &
\forall i \in \RangeN{\NumAgents} \notag.
\end{alignat}
\noindent where $\HFun(\cdot) = \PPrice\cdot\ln\big(\frac{\PPrice^2}{\PPrice^2-1}\big)$ is defined in \cref{HQ:def}.
\begin{lemma}
The value of program \eqref{eq:tri_ratio2}, denoted by $\RatioII$, is an upper bound on the value of program
\eqref{eq:tri_ratio} which is $\Ratio$.
\label{res:ratio2}
\end{lemma}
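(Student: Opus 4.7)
The plan is to take any feasible $(\MaxVals, \MaxProbs)$ for \eqref{eq:tri_ratio}, and, re-indexing so $\MaxVal_1 \ge \cdots \ge \MaxVal_\NumAgents$, exhibit (in the limit) a feasible assignment for \eqref{eq:tri_ratio2} with at least the same objective. Because the revenue curve of a triangular-distribution mixture is piecewise in $p$ with the maximum of $\PriceRev(\Instance, p)$ always attained at some breakpoint, I can restrict \eqref{eq:tri_ratio:price_cons} without loss to $p \in \{\MaxVal_1,\ldots,\MaxVal_\NumAgents\}$ (intersected with $[1,\infty)$). Starting from the product form of \eqref{eq:tri_ratio:price_cons} at $p = \MaxVal_k$ with $\MaxVal_k \ge 1$, I carry out the three successive relaxations telegraphed in the text:
\begin{enumerate}[(i)]
\item drop each denominator factor $(1 - \MaxProb_i)$, which only shrinks the LHS of the product;
\item take logarithms; and
\item for every $i \ge 2$ replace $\ln(1 + \MaxVal_i \MaxProb_i/\MaxVal_k)$ by the smaller quantity $(1/\MaxVal_k)\ln(1 + \MaxVal_i \MaxProb_i)$, using the elementary bound $(1/a)\ln(1+b) \le \ln(1+b/a)$ for $a \ge 1, b \ge 0$ (a consequence of concavity of $\ln$ through $a=1$).
\end{enumerate}
Keeping the $i=1$ term exact and rearranging yields, for each $k$,
\[
\sum_{i=2}^k \ln(1 + \MaxVal_i \MaxProb_i) \;\le\; \MaxVal_k \ln\!\left(\frac{\MaxVal_k^2}{(\MaxVal_k-1)(\MaxVal_k+\MaxVal_1\MaxProb_1)}\right).
\]

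Separately, evaluating \eqref{eq:tri_ratio:price_cons} at $p = \MaxVal_1$ collapses (only agent~$1$ survives the product) to $\MaxVal_1\MaxProb_1 \le 1$; this simultaneously bounds the \eqref{eq:tri_ratio} objective by $1 + \sum_{i=2}^\NumAgents \MaxVal_i\MaxProb_i$, matching the \eqref{eq:tri_ratio2} objective. What remains is to replace $\MaxVal_1 \MaxProb_1$ by $1$ inside the logarithm on the right-hand side above, turning it into $\HFun(\MaxVal_k)$.

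The main obstacle is precisely this final substitution. Because $\MaxVal_1 \MaxProb_1 \le 1$, the replacement actually \emph{shrinks} the right-hand side and therefore \emph{tightens} the constraint, so it cannot be argued as a pointwise relaxation. The plan to justify it as "without loss of generality" is by canonicalizing agent~$1$ into a pointmass at infinity: send $\MaxVal_1 \to \infty$ with $\MaxProb_1 = 1/\MaxVal_1$, which fixes $\MaxVal_1 \MaxProb_1 = 1$ exactly, uses vanishing probability budget, and drives agent~$1$'s factor at each $p = \MaxVal_k$ ($k \ge 2$) to $1 + 1/\MaxVal_k$. If the original $c := \MaxVal_1 \MaxProb_1$ was strictly below $1$, I then additionally scale down the $\MaxProb_i$'s for $i \ge 2$ to re-establish feasibility of the canonicalized instance; the required decrease in $\sum_{i \ge 2} \MaxVal_i \MaxProb_i$ is at most $\MaxVal_k \ln\!\bigl((\MaxVal_k+1)/(\MaxVal_k+c)\bigr) \to 1 - c$ in the infinitesimal-agent limit (where $\ln(1+x) \approx x$), which is exactly offset by the $1 - c$ gain obtained from raising $c$ to $1$ in the objective. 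Re-running the chain (i)--(iii) on the canonicalized instance then produces \eqref{firstconstraintP4} with slack, and the objective of the canonicalized \eqref{eq:tri_ratio2} assignment is at least $\sum_i \MaxVal_i \MaxProb_i$, giving $\RatioII \ge \Ratio$. The hardest verification is that the net loss from the $\MaxProb_i$-scaling never exceeds $1 - c$, which is where the reliance on the supremum being approached as $\NumAgents \to \infty$ (with vanishing pointmasses) enters essentially; all the other steps are pointwise inequalities verifiable by inspection.
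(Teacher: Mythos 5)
Your relaxation chain (i)--(iii) reproduces the paper's own three relaxations verbatim, and you correctly identify that the whole difficulty is the last substitution: replacing $\MaxVal_1\MaxProb_1$ by $1$ inside the logarithm tightens the constraint when $\MaxVal_1\MaxProb_1<1$, so it cannot be waved through as a pointwise relaxation. But your proposed repair does not close that gap. You canonicalize agent~$1$ by sending $\MaxVal_1\to\infty$ with $\MaxProb_1=1/\MaxVal_1$ and then uniformly scale down the remaining $\MaxProb_i$; the accounting that the total objective loss from this scaling never exceeds the $1-c$ gain is exactly the crux, and your justification of it invokes $\ln(1+x)\approx x$ in an ``infinitesimal-agent limit'' that the lemma does not afford you. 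The lemma is a non-asymptotic statement about two suprema: it must hold for \emph{every} finite feasible $(\MaxVals,\MaxProbs)$ of \eqref{eq:tri_ratio}, including ones with small $n$, moderate $\MaxVal_k$, and $c$ bounded well away from $1$, where the approximation $\MaxVal_k\ln\!\bigl((\MaxVal_k+1)/(\MaxVal_k+c)\bigr)\approx 1-c$ fails. You acknowledge as much (``the hardest verification\ldots enters essentially''), but that is precisely the step a proof must supply.

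The paper avoids the issue entirely with a finite, exact construction. After re-indexing by decreasing $\MaxVal'_i$, it picks the smallest prefix length $j$ with $\sum_{i\le j}\MaxVal'_i\MaxProb'_i>1$, and peels off exactly a unit of objective by defining $\MaxVal_2=\MaxVal'_j$, $\MaxProb_2=\MaxProb'_j-\delta/\MaxVal'_j$ (with $\delta=1-\sum_{i<j}\MaxVal'_i\MaxProb'_i$), then copies agents $j+1,\ldots,\NumAgents$ as agents $3,\ldots$ of the new assignment. The feasibility of this reshaped assignment in \eqref{eq:tri_ratio2} is established by Lemma~\ref{lem:logsum}, an elementary inequality stating that if $a+b=\sum_{i=1}^m z_i$ and $a\le z_m\le b$ then $\ln(1+a)+\ln(1+b)\le\sum\ln(1+z_i)$ (applied with $b=1/\MaxVal_k$, $a=\MaxVal_2\MaxProb_2/\MaxVal_k$, $z_i=\MaxVal'_i\MaxProb'_i/\MaxVal_k$). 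This replaces your ``scale everything down and take a limit'' step with a single pointwise rearrangement inequality that is checked once and holds for every finite instance. If you want to keep your outline, you would need to prove, not assume, the scaling bound; the cleaner path is the paper's splitting argument via Lemma~\ref{lem:logsum}.
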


Next we show that we can assume without loss of generality the pricing
revenue constraint \eqref{firstconstraintP4} is tight for all $k \in
\RangeMN{1}{\NumAgents}$ in program \eqref{eq:tri_ratio2}. That will
allow us to specify one set of variables (e.g., $\MaxProbs$) in terms
of the other set of variables (e.g., $\MaxVals$), which consequently
allows us to eliminate the former variables and drop the pricing revenue
constraint (\ref{firstconstraintP4}). To this end, we first define a
\emph{canonical} feasible solution for \eqref{eq:tri_ratio2},
restriction to which is without loss given by \cref{res:tight}.

\begin{definition}
A feasible assignment $(\MaxVals,\MaxProbs)$ for \eqref{eq:tri_ratio2} is {\em canonical} if the pricing constraint (\ref{firstconstraintP4})
is tight for all $k \in \RangeMN{2}{\NumAgents}$.
\end{definition}
\begin{lemma}
\label{res:tight}%
For any feasible assignment $(\MaxVals,\MaxProbs)$ for \eqref{eq:tri_ratio2}, there exists an equivalent canonical
feasible assignment $(\MaxVals',\MaxProbs')$ obtaining the same objective value, that is $\sum_i \MaxVal_i\MaxProb_i = \sum_i
\MaxVal'_i\MaxProb'_i$.
\end{lemma}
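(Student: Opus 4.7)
The plan is to construct $(\MaxVals',\MaxProbs')$ from $(\MaxVals,\MaxProbs)$ by keeping each product $y_i := \MaxVal_i \MaxProb_i$ fixed and only adjusting how the product is split between $\MaxVal_i'$ and $\MaxProb_i'$. Since the objective $1 + \sum_{i=2}^{\NumAgents} \MaxVal_i \MaxProb_i$ and the left-hand side of the pricing-revenue constraint \eqref{firstconstraintP4} depend on $(\MaxVals,\MaxProbs)$ only through the $y_i$'s, such a reassignment automatically preserves the objective and preserves the LHS of every constraint; only the right-hand side $\HFun(\MaxVal_k)$ changes. The freedom I exploit is that $\HFun$ is a strictly decreasing bijection from $(1,\infty)$ onto $(0,\infty)$ (a routine calculus check: $\HFun'(\PPrice)=\ln\frac{\PPrice^2}{\PPrice^2-1}-\frac{2}{\PPrice^2-1}<0$ using $\ln(1+x)<x$, with limits $\HFun(1^+)=\infty$ and $\HFun(\infty)=0$), so I can always increase $\MaxVal_k$ to decrease $\HFun(\MaxVal_k)$ and thereby tighten the $k$-th constraint.

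Concretely, for each $k\in\RangeMN{2}{\NumAgents}$ I set
\[
    \MaxVal_k' \;\triangleq\; \HFun^{-1}\!\bigg(\sum_{i=2}^k \ln(1+y_i)\bigg), \qquad \MaxProb_k' \;\triangleq\; \frac{y_k}{\MaxVal_k'},
\]
keeping $\NumAgents$ unchanged (and handling the degenerate case in which all $y_i=0$ for $i\le k$ by taking $\MaxVal_k'$ arbitrarily large, as the constraint is then vacuous).

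To verify feasibility of $(\MaxVals',\MaxProbs')$: (a) constraint \eqref{firstconstraintP4} is tight at every $k$ by construction, so the new assignment is canonical. (b) Since the original assignment was feasible, $\HFun(\MaxVal_k)\ge\sum_{i=2}^k\ln(1+y_i)=\HFun(\MaxVal_k')$, and monotonicity of $\HFun$ gives $\MaxVal_k'\ge\MaxVal_k$; hence $\MaxProb_k'=y_k/\MaxVal_k'\le y_k/\MaxVal_k=\MaxProb_k$, which yields $\sum_{i=2}^{\NumAgents}\MaxProb_i'\le\sum_{i=2}^{\NumAgents}\MaxProb_i\le 1$. (c) Since the partial sums $\sum_{i=2}^k\ln(1+y_i)$ are nondecreasing in $k$ and $\HFun^{-1}$ is decreasing, $\MaxVal_{k+1}'\le\MaxVal_k'$, preserving the ordering constraint. (d) Nonnegativity is immediate. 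Finally, because $\MaxVal_k'\MaxProb_k'=y_k$, the new objective $1+\sum_{i=2}^{\NumAgents}\MaxVal_i'\MaxProb_i'=1+\sum_{i=2}^{\NumAgents}y_i$ exactly matches the original.

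The only non-trivial ingredient is the monotonicity and range of $\HFun$ needed to define $\MaxVal_k'=\HFun^{-1}(\cdot)$; everything else is bookkeeping. Note that this construction is essentially the content of the lemma: canonical feasible assignments are parameterized by the $y_i$'s alone (equivalently, by $\MaxVals$), and any feasible point can be rewritten in this form without changing the objective.
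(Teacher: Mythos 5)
Your construction is the same one the paper uses: you raise each $\MaxVal_k$ to the value $\MaxVal_k'=\HFun^{-1}\bigl(\sum_{i=2}^k\ln(1+\MaxVal_i\MaxProb_i)\bigr)$ that makes constraint \eqref{firstconstraintP4} tight, and rescale $\MaxProb_k'=\MaxProb_k\,\MaxVal_k/\MaxVal_k'$ to preserve the products $\MaxVal_k\MaxProb_k$. The paper's proof is terser (and disposes of the $\MaxProb_k=0$ degeneracy by simply dropping those agents), but the idea and the resulting assignment are identical; your explicit checks of the ordering and capacity constraints are correct and just spell out what the paper leaves implicit.
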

\begin{proof}
Without loss of generality assume $\MaxProb_k > 0$ for all $k \in \RangeMN{2}{\NumAgents}$.\footnote{If
$\MaxProb_k=0$, we can drop agent $k$ without affecting feasibility or objective value.} The right hand side of the pricing constraint (\ref{firstconstraintP4}) is $\HFun(\MaxVal_k)$ which is decreasing in $\MaxVal_k$ (see~\cref{res:mono}) and approaches $0$ as
$\MaxVal_k\to\infty$, so for every $k \in \RangeMN{2}{\NumAgents}$ there exists $\MaxVal'_k \ge \MaxVal_k$ such that
\begin{align*}
    \sum_{i=2}^k \ln\left(1+\MaxVal_i \MaxProb_i\right) &= \HFun(\MaxVal'_k) & & \forall k \in \RangeMN{2}{\NumAgents}.
\end{align*}
Observe that by the above construction we always have $\MaxVal'_2 \ge \ldots \ge \MaxVal'_\NumAgents$. We then decrease  $\MaxProb_k$ to
$\MaxProb'_k = \MaxProb_k \frac{\MaxVal_k}{\MaxVal'_k}$ for each $k \in \RangeMN{2}{\NumAgents}$ to obtain the desired
assignment $(\MaxVals', \MaxProbs')$.
\end{proof}

By \cref{res:tight}, we can restrict our attention to canonical assignments of \eqref{eq:tri_ratio2} without loss of
generality. In particular, we can fully identify such a canonical assignment  by specifying only
$\MaxVals=(\MaxVal_1,\ldots, \MaxVal_\NumAgents)$ since the corresponding $\MaxProbs$ is given by
\begin{align}
    \MaxProb_k &= \frac{e^{\HFun(\MaxVal_k)-\HFun(\MaxVal_{k-1})}-1}{\MaxVal_k} && \forall k \in \RangeMN{2}{\NumAgents}. \label{eq:max_prob}
\end{align}


Therefore we can obtain from program \eqref{eq:tri_ratio2} the
following program.
\begin{alignat}{4}
\RatioII = &&\sup_{\NumAgents\in\PosInts,\MaxVals} & \qquad &  1&+\sum_{i=2}^{\NumAgents} \MaxVal_i \MaxProb_i
     & \qquad & \tag{P5}\label{eq:ratio3} \\
&&\text{subject to}   & &
    \MaxProb_k &= \frac{e^{\HFun(\MaxVal_k)-\HFun(\MaxVal_{k-1})}-1}{\MaxVal_k} && \forall k \in \RangeMN{2}{\NumAgents} \tag{P5.1}\\
&&                    & &         \sum_{i=2}^{\NumAgents} \MaxProb_i &\le 1 \label{eq:capacity} \tag{P5.2}\\
&&                    & &         \MaxVal_{i+1} &\le \MaxVal_i,
    & & \forall i \in \RangeMN{2}{\NumAgents-1} \notag\\
&&                    & &         \MaxVal_i &\ge 0,~ \MaxProb_i \ge 0 & &
\forall i \in \RangeN{\NumAgents} \notag.
\end{alignat}

%
\paragraph{Continuum of agents.}
Given that $\NumAgents$ itself is a variable, a solution to program
\eqref{eq:ratio3} can be practically specified by a finite subset
$\MaxVals \subset \PosReals$ where $\MaxVal_i$ is the $i$th largest
value in that subset.  We now show that the optimal solution to
program \eqref{eq:ratio3} corresponds to
$\MaxVals=\RangeAb{\lowestprice}{\infty}$ (for some $\lowestprice >
1$) which can be viewed as an instance with infinitely many
infinitesimal agents.

For any given $\PPrice' >\PPrice > 1$, we define a continuum of agents $\RangeAb{\PPrice}{\PPrice'}$ by defining for each $m \in
\PosInts$ a discrete family of agents of size $m$ spanning $\RangeAb{\PPrice}{\PPrice'}$ and by taking the limit of this family
as $m \to \infty$. Formally, for each $m \in \PosInts$, we consider the family of agents with distributions $
\{\TriDist(\MaxValD_j, (e^{\HFun(\MaxValD_j)-\HFun(\MaxValD_{j-1})}-1)/\MaxValD_j)\}_{j=1}^{m}$ where
$\MaxValD_j=\PPrice'+\frac{j}{m}(\PPrice-\PPrice')$. Observe that the agents in these families satisfy equation~\eqref{eq:max_prob}.
Furthermore, observe that
\begin{align*}
  \lim_{\delta \to 0} \left ( \frac{e^{\HFun(\MaxValC)-\HFun(\MaxValC+\delta)}-1}{\MaxValC}
    \cdot \frac{1}{\delta}\right )&= -\frac{\HFun'(\MaxValC)}{\MaxValC}.
\end{align*}
Therefore in a continuum of agents $\RangeAb{\PPrice}{\PPrice'}$ each infinitesimal agent $\MaxValC \in
\RangeAb{\PPrice}{\PPrice'}$ has a distribution of $\TriDist(\MaxValC, -\frac{\HFun'(\MaxValC)}{\MaxValC}\,\dif \MaxValC),$
which implies that the contribution of $\RangeAb{\PPrice}{\PPrice'}$ to the objective value of \eqref{eq:ratio3} is
\begin{align}
    \int_{\PPrice}^{\PPrice'} \MaxValC\cdot(-\tfrac{\HFun'(\MaxValC)}{\MaxValC})\dif \MaxValC
        &= \HFun(\PPrice)-\HFun(\PPrice'),
    \label{eq:h_contrib}
\end{align}
and the contribution of $\RangeAb{\PPrice}{\PPrice'}$ to the left hand side of the constraint  \eqref{eq:capacity}, i.e.\@  $\sum_i \MaxProb_i \le 1$
which is referred to as the \emph{capacity constraint}, is
\begin{align}
    \int_{\PPrice}^{\PPrice'} -\tfrac{\HFun'(\MaxValC)}{\MaxValC}\dif \MaxValC &= \QFun(\PPrice)-\QFun(\PPrice'),
    \label{eq:q_contrib}
\end{align}
where $\QFun(\MaxValC) = \int_{\PPrice}^{\infty} -\frac{1}{\MaxValC}\HFun'(\MaxValC)\dif \MaxValC$ as defined in \cref{HQ:def}.

Via the above derivation of a continuum of agents,
program~\eqref{eq:ratio3}, on the instance corresponding to the
continuum $[\PPrice,\infty)$, simplifies as:
\begin{alignat}{6}
\Ratio''=&& \max_{\PPrice\geq 1}&\quad & &1+\HFun(\PPrice) & \qquad & \tag{P6}\label{eq:ratio6} \\
    &&\textrm{subject to}& & &\QFun(\PPrice)\leq 1.\notag && .
\end{alignat}


Next we will sketch a construction that demonstrates that any feasible
solution $\MaxVals=(\MaxVal_1,\ldots,\MaxVal_\NumAgents)$ to
program~\eqref{eq:ratio3} can be replaced by a continuum of agents
that corresponds to an interval $\RangeAb{\lowestprice}{\infty}$ (for
some $\lowestprice > 1$ to be determined) and the objective of
\eqref{eq:ratio3} is strictly increased.
Note that $\MaxVal_1$ does not appear anywhere in \eqref{eq:ratio3}; for
notational convenience we redefine it as $\MaxVal_1=\infty$. Suppose for each
$i \in \RangeMN{2}{\NumAgents}$ we replace the agent $\MaxVal_i$ with the
continuum of agents $\RangeAb{\MaxVal_i}{\MaxVal_{i-1}}$. It follows from
\cref{eq:h_contrib,eq:max_prob} that this replacement changes the object
value of \eqref{eq:ratio3} by
$\HFun(\MaxVal_i)-\HFun(\MaxVal_{i-1})-\MaxVal_i\MaxProb_i=\ln(1+\MaxVal_i\MaxProb_i)-\MaxVal_i\MaxProb_i
< 0$ which is unfortunately always negative and thus the opposite of what we
want to prove. On the other hand, it follows from
\cref{eq:q_contrib,eq:max_prob} that this replacement also changes the left
hand side of the capacity constraint \eqref{eq:capacity} by
$\QFun(\MaxVal_i)-\QFun(\MaxVal_{i-1})-\MaxProb_i$ which is also negative (as
we will show later), and thus creates some slack in the capacity
constraint~\eqref{eq:capacity}.  Summing over the slack created in the
capacity constraint~\eqref{eq:capacity} from converting each agent to a
continuum, we can add a new continuum of agents
$\RangeAb{\lowestprice}{\MaxVal_n}$ where $\lowestprice < \MaxVal_n$ is
chosen to make the capacity constraint \eqref{eq:capacity} tight. As a
consequence of the following claims, the net change in the objective value
from this transformation is positive.
\begin{enumerate}[(i)]
\item
\label{it:1}%
The amount of slack created in the capacity constraint \eqref{eq:capacity} by replacing $\MaxVal_i$ with
$\RangeAb{\MaxVal_i}{\MaxVal_{i-1}}$ is more than the decrease in the objective value of \eqref{eq:ratio3}. Using
~\cref{eq:h_contrib,eq:q_contrib}, we can formally write this claim as
$\MaxProb_i-(\QFun(\MaxVal_i)-\QFun(\MaxVal_{i-1})) > \MaxVal_i\MaxProb_i-(\HFun(\MaxVal_i)-\HFun(\MaxVal_{i-1}))$.
This is proved below in \cref{res:twist}.
\item
\label{it:2}%
If there is a slack of $\Delta > 0$ in the capacity constraint \eqref{eq:capacity}, it can be used to extend the last continuum of agents
to increase the objective value by more than $\Delta$. Using ~\cref{eq:h_contrib,eq:q_contrib}, we can formalize this
claim as follows: if $\lowestprice$ is chosen such that $\QFun(\lowestprice)-\QFun(\MaxVal_n)=\Delta$, then
$\HFun(\lowestprice)-\HFun(\MaxVal_n) > \Delta$. This is proved below in \cref{res:mono}.
\end{enumerate}

The suggestion from the above construction is that from any solution
$\MaxVals$ to program~\eqref{eq:ratio3}, a price $\lowestprice$ can be
identified such that the continuum of agents on
$[\lowestprice,\infty)$ has higher objective value.  In other words,
  the optimal values of program~\eqref{eq:ratio3} and
  program~\eqref{eq:ratio6} are equal.  This is proved below in
  \cref{lastprog:lemma}; though we defer the
  proof that the solution of program \eqref{eq:ratio6} corresponds to
  a limit solution of program~\eqref{eq:ratio3} to \cref{sec:lowerbound}.

\paragraph{Algebraic upper-bound proof.}
The rest of this \lcnamecref{sec:upperbound} develops a formal but
purely algebraic proof that is based on the approach sketched in the
previous paragraphs.
%
The proofs of the first two lemmas, below,
can be found in \cref{proof:res:mono}.

\begin{lemma}
\label{res:mono}%
The functions $\HFun(\PPrice)$, $\QFun(\PPrice)$, and $\HFun(\PPrice)-\QFun(\PPrice)$ are all decreasing in $\PPrice$,
for $\PPrice > 1$.
\end{lemma}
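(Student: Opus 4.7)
The plan is to verify monotonicity of each function directly by computing its derivative, exploiting the fact that $\QFun$ is defined as an integral involving $\HFun'$.

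First I would compute $\HFun'(\PPrice)$ explicitly. Writing $\HFun(\PPrice) = \PPrice\bigl[\ln(\PPrice^2) - \ln(\PPrice^2-1)\bigr]$ and differentiating yields
\[
\HFun'(\PPrice) \;=\; \ln\frac{\PPrice^2}{\PPrice^2-1} \;-\; \frac{2}{\PPrice^2-1},
\]
after simplification of $\PPrice \cdot \bigl(\tfrac{2}{\PPrice} - \tfrac{2\PPrice}{\PPrice^2-1}\bigr) = -\tfrac{2}{\PPrice^2-1}$. The main (tiny) obstacle is showing $\HFun'(\PPrice)<0$ for $\PPrice>1$. For this I would apply the elementary inequality $\ln(1+x) \le x$ with $x = 1/(\PPrice^2-1) > 0$, which gives $\ln\bigl(\tfrac{\PPrice^2}{\PPrice^2-1}\bigr) = \ln\bigl(1 + \tfrac{1}{\PPrice^2-1}\bigr) \le \tfrac{1}{\PPrice^2-1}$, and therefore
\[
\HFun'(\PPrice) \;\le\; \frac{1}{\PPrice^2-1} - \frac{2}{\PPrice^2-1} \;=\; -\frac{1}{\PPrice^2-1} \;<\; 0.
\]

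Next, monotonicity of $\QFun$ is immediate from its definition and the fundamental theorem of calculus: $\QFun'(\PPrice) = \HFun'(\PPrice)/\PPrice$, which is negative for $\PPrice > 1$ by the previous step. (A quick asymptotic check that the defining improper integral converges, namely $-\HFun'(v)/v = O(1/v^3)$ as $v \to \infty$, shows $\QFun$ is well-defined on $(1,\infty)$.)

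Finally, for $\HFun - \QFun$ I would just subtract the two derivatives:
\[
\bigl(\HFun - \QFun\bigr)'(\PPrice) \;=\; \HFun'(\PPrice) - \frac{\HFun'(\PPrice)}{\PPrice} \;=\; \HFun'(\PPrice)\cdot\frac{\PPrice-1}{\PPrice}.
\]
For $\PPrice > 1$ the factor $(\PPrice-1)/\PPrice$ is positive while $\HFun'(\PPrice) < 0$, so the product is negative, proving that $\HFun - \QFun$ is also decreasing. No step requires more than direct computation and one application of $\ln(1+x) \le x$.
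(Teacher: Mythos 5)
Your proof is correct and follows essentially the same route as the paper's: compute $\HFun'$ explicitly, bound it below zero via $\ln(1+x)\le x$, then read off $\QFun' = \HFun'/\PPrice$ from the integral definition and $(\HFun-\QFun)' = \HFun'(1-1/\PPrice)$, both of which inherit the sign of $\HFun'$. The only addition is your (harmless, slightly extraneous) check that the improper integral defining $\QFun$ converges.
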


\begin{lemma}
\label{res:hbound}%
for any $\PPrice'> \PPrice > 1$ the following inequality holds: $
\HFun(\PPrice)-\HFun(\PPrice') <
\ln(\frac{\PPrice}{\PPrice-1})-\ln(\frac{\PPrice'}{\PPrice'-1})$ .
\end{lemma}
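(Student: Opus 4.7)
The plan is to recast the inequality as monotonicity of a single auxiliary function and reduce everything to a standard $\ln(1+x)$ estimate. Define
\[
    \phi(\PPrice) \;\triangleq\; \ln\!\left(\tfrac{\PPrice}{\PPrice-1}\right)-\HFun(\PPrice).
\]
The claimed inequality $\HFun(\PPrice)-\HFun(\PPrice')<\ln(\PPrice/(\PPrice-1))-\ln(\PPrice'/(\PPrice'-1))$ for $\PPrice'>\PPrice>1$ is exactly the statement that $\phi$ is strictly decreasing on $(1,\infty)$. So it suffices to show $\phi'(\PPrice)<0$ for every $\PPrice>1$.

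Next I would compute the two derivatives separately. For the first term, $\tfrac{d}{d\PPrice}\ln(\PPrice/(\PPrice-1)) = -1/(\PPrice(\PPrice-1))$. For $\HFun(\PPrice)=\PPrice[2\ln\PPrice-\ln(\PPrice^2-1)]$, the product rule yields
\[
    \HFun'(\PPrice) \;=\; \ln\!\left(\tfrac{\PPrice^2}{\PPrice^2-1}\right)-\tfrac{2}{\PPrice^2-1}.
\]
Hence $\phi'(\PPrice)<0$ is equivalent to $\ln\bigl(\tfrac{\PPrice^2}{\PPrice^2-1}\bigr) > \tfrac{2}{\PPrice^2-1}-\tfrac{1}{\PPrice(\PPrice-1)}$. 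A short partial-fractions computation (factor $1/(\PPrice-1)$, combine $2/(\PPrice+1)-1/\PPrice$) collapses the right-hand side to $1/(\PPrice(\PPrice+1))$, so the job reduces to verifying
\[
    \ln\!\left(1+\tfrac{1}{\PPrice^2-1}\right) \;>\; \tfrac{1}{\PPrice(\PPrice+1)} \qquad\text{for all }\PPrice>1.
\]

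For the final step I would apply the standard elementary bound $\ln(1+x)>x/(1+x)$ for $x>0$ with $x=1/(\PPrice^2-1)$; this gives $\ln(1+1/(\PPrice^2-1))>1/\PPrice^2$, and since $\PPrice^2<\PPrice(\PPrice+1)$ we conclude $1/\PPrice^2>1/(\PPrice(\PPrice+1))$, finishing the chain. I do not expect any real obstacle: the one spot that requires care is the algebraic collapse of $\tfrac{2}{\PPrice^2-1}-\tfrac{1}{\PPrice(\PPrice-1)}$ to $\tfrac{1}{\PPrice(\PPrice+1)}$, because once that identity is in place the resulting $\ln(1+x)$ inequality is loose enough that the crudest textbook bound already suffices.
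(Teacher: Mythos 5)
Your proof is correct and shares the paper's overall framing: define an auxiliary function (your $\phi$ is exactly the paper's $\GFun$ with a sign flip), reduce the two-point inequality to a monotonicity claim, and verify the sign of the derivative. Where the two diverge is in how that derivative sign is established. The paper computes a second derivative, shows $\GFun''(\PPrice)=-\tfrac{3\PPrice+1}{(\PPrice-1)\PPrice^2(\PPrice+1)^2}<0$, and combines this with $\lim_{\PPrice\to\infty}\GFun'(\PPrice)=0$ to conclude $\GFun'>0$. You instead bound the derivative directly: after the (correct) partial-fractions collapse $\tfrac{2}{\PPrice^2-1}-\tfrac{1}{\PPrice(\PPrice-1)}=\tfrac{1}{\PPrice(\PPrice+1)}$, the needed inequality $\ln\!\left(1+\tfrac{1}{\PPrice^2-1}\right)>\tfrac{1}{\PPrice(\PPrice+1)}$ follows from $\ln(1+x)>\tfrac{x}{1+x}$ (which gives $>1/\PPrice^2$) together with $\PPrice^2<\PPrice(\PPrice+1)$. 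Your route is more elementary: it stays at first order, needs no limit argument, and trades the paper's $\GFun''$ computation for a one-line textbook bound. Both are short; yours is arguably the cleaner self-contained argument.
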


\begin{lemma}
\label{res:twist}%
For any $\PPrice' > \PPrice > 1$ and $\Prob = \frac{e^{\HFun(\PPrice)-\HFun(\PPrice')}-1}{\PPrice}$ the following
inequalities hold:
\begin{align}
    \Prob-(\QFun(\PPrice)-\QFun(\PPrice')) &\ge \PPrice\Prob -(\HFun(\PPrice)-\HFun(\PPrice')) \ge 0
\end{align}
\end{lemma}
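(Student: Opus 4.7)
The plan is to prove the two inequalities separately, starting with the right one. By construction $\PPrice\Prob = e^{\HFun(\PPrice)-\HFun(\PPrice')}-1$, so setting $x := \HFun(\PPrice)-\HFun(\PPrice')$ (nonnegative by monotonicity of $\HFun$ from \cref{res:mono}) reduces the right inequality to the elementary bound $e^x - 1 \ge x$.

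For the left inequality, I would first rearrange it into the equivalent form
\[
\bigl(\HFun(\PPrice)-\HFun(\PPrice')\bigr)\ -\ \bigl(\QFun(\PPrice)-\QFun(\PPrice')\bigr)\ \ge\ \frac{\PPrice-1}{\PPrice}\bigl(e^{\HFun(\PPrice)-\HFun(\PPrice')}-1\bigr),
\]
fix $\PPrice$, and view the deficit $f(t) := \text{LHS}-\text{RHS}$ as a function of the free parameter $t = \PPrice' \ge \PPrice$. Since $f(\PPrice)=0$, it suffices to show that $f'(t)\ge 0$ on $[\PPrice,\infty)$.

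Using $\QFun'(t)=\HFun'(t)/t$, which is immediate from the definition of $\QFun$ via the fundamental theorem of calculus, a short calculation factors the derivative as
\[
f'(t) \ =\ -\HFun'(t)\cdot\left[\frac{t-1}{t}\ -\ \frac{\PPrice-1}{\PPrice}\,e^{\HFun(\PPrice)-\HFun(t)}\right].
\]
The prefactor $-\HFun'(t)$ is nonnegative by \cref{res:mono}, and after exponentiating the bracket is nonnegative exactly when $e^{\HFun(\PPrice)-\HFun(t)}\le \frac{\PPrice/(\PPrice-1)}{t/(t-1)}$, which is \cref{res:hbound} verbatim. The main (and really only) point of the proof is spotting this factoring: \cref{res:hbound} is effectively the infinitesimal form of the twist inequality, and integrating it against the weight supplied by $-\HFun'(t)$ recovers the global statement; the remaining details (sign of $\Delta H$, sign of $\HFun'$, formula for $\QFun'$) are either bookkeeping or already packaged in \cref{res:mono}.
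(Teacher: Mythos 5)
Your proof is correct and essentially identical to the paper's: your $f(t)$ is the paper's $\WFun(\PPrice,\PPrice')$, your derivative factorization matches theirs exactly, and the appeal to \cref{res:mono} (for the sign of $-\HFun'$) and \cref{res:hbound} (for the bracket) is the same. The handling of the right-hand inequality via $e^x-1\ge x$ is the same elementary observation the paper states as $\PPrice\Prob-\ln(1+\PPrice\Prob)>0$.
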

\begin{proof}
Define
\begin{align*}
    \WFun(\PPrice,\PPrice') &\triangleq \HFun(\PPrice)-\QFun(\PPrice)-\HFun(\PPrice')+\QFun(\PPrice')+\Prob-\PPrice\Prob \\
                            &= \HFun(\PPrice)-\QFun(\PPrice)-\HFun(\PPrice')+\QFun(\PPrice')
                                -(\PPrice-1)\big(e^{\HFun(\PPrice)-\HFun(\PPrice')}-1\big)/{\PPrice}.
\end{align*}
Observe that proving the first inequality in the statement of the lemma is equivalent to proving
$\WFun(\PPrice,\PPrice')> 0$. We instead prove that $\WFun(\PPrice,\PPrice')$ is increasing in $\PPrice'$ which
together with the trivial fact that $\WFun(\PPrice,\PPrice) = 0$ implies $\WFun(\PPrice,\PPrice') > 0$.
\vspace{-2mm}
\begin{align*}
    \pd{}{\PPrice'}\WFun(\PPrice,\PPrice')
                &=- \HFun'(\PPrice')+\HFun'(\PPrice')/{\PPrice'}
                                +(\PPrice-1)\,\HFun'(\PPrice') \, {e^{\HFun(\PPrice)-\HFun(\PPrice')}}/{\PPrice} \\
                &= -\HFun'(\PPrice')\left[\frac{\PPrice'-1}{\PPrice'}-\frac{\PPrice-1}{\PPrice}\, e^{\HFun(\PPrice)-\HFun(\PPrice')}\right] \\
                &> -\HFun'(\PPrice')\left[\frac{\PPrice'-1}{\PPrice'}
                    -\frac{\PPrice-1}{\PPrice}\, e^{\ln(\frac{\PPrice}{\PPrice-1})-\ln(\frac{\PPrice'}{\PPrice'-1})}\right]=0.
\end{align*}
The final inequality follows from \cref{res:mono,res:hbound}:
by \cref{res:mono}, $-\pd{}{\PPrice'} \HFun(\PPrice') > 0$; and by \cref{res:hbound},
$e^{\HFun(\PPrice)-\HFun(\PPrice')}$ is less than
$e^{\ln(\frac{\PPrice}{\PPrice-1})-\ln(\frac{\PPrice'}{\PPrice'-1})}$, so replacing the former with the latter only
decreases the value of the expression inside the brackets because its coefficient is $-\frac{\PPrice-1}{\PPrice}$ which
is negative.

The second inequality in the statement of the lemma follows trivially
from the fact that
$\HFun(\PPrice)-\HFun(\PPrice')=\ln(1+\PPrice\Prob)$ thus $
\PPrice\Prob -(\HFun(\PPrice)-\HFun(\PPrice')) = \PPrice\Prob
-\ln(1+\PPrice\Prob) > 0$.
\end{proof}
\vspace{-3mm}

\begin{lemma}
\label{lastprog:lemma}
The value of program \eqref{eq:ratio6}, denoted by $\Ratio''$, is an
upper bound on the value of program \eqref{eq:ratio3} which is
$\Ratio'$.
\end{lemma}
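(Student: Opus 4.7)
The plan is to execute the ``continuum replacement'' construction already sketched informally before the lemma statement: starting from any canonical feasible assignment $(\MaxVals,\MaxProbs)$ for~\eqref{eq:ratio3}, I will construct a feasible $\PPrice = \lowestprice$ for~\eqref{eq:ratio6} whose objective $1 + \HFun(\lowestprice)$ is no smaller than the discrete objective $1 + \sum_{i=2}^{\NumAgents}\MaxVal_i\MaxProb_i$. Since it suffices to take the supremum over canonical assignments (\cref{res:tight}), this yields $\Ratio' \le \Ratio''$. Re-index so that $\MaxVal_2 \ge \cdots \ge \MaxVal_{\NumAgents} > 1$, adopt the convention $\MaxVal_1 \triangleq \infty$ (so $\HFun(\MaxVal_1) = \QFun(\MaxVal_1) = 0$), and set $C \triangleq \sum_{i=2}^{\NumAgents} \MaxProb_i \le 1$.

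Step~1: for each $i \in \{2,\dots,\NumAgents\}$ replace agent $\MaxVal_i$ by the continuum of infinitesimal triangular agents on $[\MaxVal_i,\MaxVal_{i-1})$. By~\cref{eq:h_contrib,eq:q_contrib} this exchanges the pair of (objective, capacity) contributions $(\MaxVal_i\MaxProb_i,\MaxProb_i)$ for $(\HFun(\MaxVal_i)-\HFun(\MaxVal_{i-1}),\,\QFun(\MaxVal_i)-\QFun(\MaxVal_{i-1}))$, and telescoping produces a continuum on $[\MaxVal_{\NumAgents},\infty)$ of total objective $1 + \HFun(\MaxVal_{\NumAgents})$ and capacity use $\QFun(\MaxVal_{\NumAgents})$. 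Summing the first inequality of \cref{res:twist} over $k = 2,\dots,\NumAgents$ (with $\PPrice = \MaxVal_k$ and $\PPrice' = \MaxVal_{k-1}$) gives
\begin{equation*}
  C - \QFun(\MaxVal_{\NumAgents}) \;\ge\; \sum_{i=2}^{\NumAgents}\MaxVal_i\MaxProb_i \;-\; \HFun(\MaxVal_{\NumAgents}) \;\ge\; 0,
\end{equation*}
so the capacity has nonnegative slack $\Delta \triangleq C - \QFun(\MaxVal_{\NumAgents})$. Step~2: $\QFun$ is strictly decreasing on $(1,\infty)$ by \cref{res:mono}, and a routine limit computation from the explicit formula for $\HFun$ shows $\lim_{\PPrice \to 1^+}\QFun(\PPrice) = \infty$, so $\QFun$ maps $(1,\infty)$ onto $(0,\infty)$; in particular there is a unique $\lowestprice \in (1,\MaxVal_{\NumAgents}]$ with $\QFun(\lowestprice) = C$. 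Extending the continuum to $[\lowestprice,\infty)$ yields an instance with objective $1 + \HFun(\lowestprice)$ and capacity use exactly $C \le 1$, so $\PPrice = \lowestprice$ is feasible for \eqref{eq:ratio6}. Step~3: because $\HFun - \QFun$ is decreasing (\cref{res:mono}) and $\lowestprice \le \MaxVal_{\NumAgents}$,
\begin{equation*}
  \HFun(\lowestprice) - \HFun(\MaxVal_{\NumAgents}) \;\ge\; \QFun(\lowestprice) - \QFun(\MaxVal_{\NumAgents}) \;=\; \Delta,
\end{equation*}
which, chained with the Step~1 inequality, gives $\HFun(\lowestprice) \ge \sum_{i=2}^{\NumAgents}\MaxVal_i\MaxProb_i$, completing the comparison.

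The substantive content is already packaged into \cref{res:twist,res:mono}; what remains is essentially bookkeeping that sums those pointwise inequalities telescopically and then trades capacity slack for objective value. The only subtlety is the degenerate case $C = \QFun(\MaxVal_{\NumAgents})$, in which $\lowestprice = \MaxVal_{\NumAgents}$ and both displayed inequalities collapse to equalities but the conclusion remains valid; one should also note that $\MaxVal_{\NumAgents} > 1$ in any canonical assignment (since the right-hand side $\HFun(\MaxVal_k)$ of \eqref{firstconstraintP4} is only defined for $\MaxVal_k > 1$), which ensures $\lowestprice$ lies in the domain required by~\eqref{eq:ratio6}.
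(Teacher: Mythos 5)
Your proof is correct and follows essentially the same route as the paper: both telescope the first inequality of \cref{res:twist} over the discrete agents, use the monotonicity facts from \cref{res:mono} to convert capacity slack into objective value, and close by exhibiting a feasible price for \eqref{eq:ratio6}. The only cosmetic difference is that you pick the instance-dependent price $\lowestprice=\QFun^{-1}(C)$ with $C=\sum_{i\ge 2}\MaxProb_i$, whereas the paper takes the universal $\SPPrice=\QFun^{-1}(1)$ and additionally invokes $C\le 1$; also, your appeal to \cref{res:tight} is unnecessary here, since constraint~(P5.1) already forces any feasible assignment of~\eqref{eq:ratio3} to be canonical.
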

\begin{proof}
Let $(\MaxVals,\MaxProbs)$ be any arbitrary feasible assignment for
program \eqref{eq:ratio3}. We show there exists a feasible assignment
for program \eqref{eq:ratio6} with objective value upper bounding the
objective value of $(\MaxVals,\MaxProbs)$ in program
\eqref{eq:ratio3}.  Define $\SPPrice\triangleq\QFun^{-1}(1)$, a
candidate solution to program~\eqref{eq:ratio6} that meets the
feasibility constraint with equality. Note that such a $\SPPrice$
exists because $\QFun(\infty)=0$, $\QFun(1)=\infty$, and
$\QFun(\cdot)$ is continuous.  Observe that the objective value of
\eqref{eq:ratio3} for $(\MaxVals,\MaxProbs)$ satisfies:
\vspace{-2mm}
\begin{align*}
    1+\sum_{k=2}^\NumAgents \MaxVal_k\MaxProb_k
        &\le 1+\sum_{k=2}^\NumAgents \left(\HFun(\MaxVal_k)-\HFun(\MaxVal_{k-1})-(\QFun(\MaxVal_k)-\QFun(\MaxVal_{k-1}))+\MaxProb_k\right)
            &&\text{by \cref{res:twist} and $\MaxVal_1=\infty$ } \\
        &= 1+\HFun(\MaxVal_\NumAgents)-\QFun(\MaxVal_\NumAgents)+\sum_{k=2}^\NumAgents \MaxProb_k && \text{as $\HFun(\infty)=\QFun(\infty)=0$}\\
        &\le 1+\HFun(\MaxVal_\NumAgents)-\QFun(\MaxVal_\NumAgents)+1 &&\text{as $\sum_{k=2}^\NumAgents \MaxProb_k \le 1$} \\
        &< 1+\HFun(\SPPrice)-\QFun(\SPPrice)+1 &&\text{as proved below} \tag{$*$} \label{eq:dagger}\\
        &= 1+\HFun(\SPPrice) &&\text{as $\QFun(\SPPrice)=1$}.
\end{align*}
To prove inequality \eqref{eq:dagger} we show that
$\SPPrice<\MaxVal_\NumAgents$ which together with \cref{res:mono}
implies $\HFun(\SPPrice)-\QFun(\SPPrice) >
\HFun(\MaxVal_\NumAgents)-\QFun(\MaxVal_\NumAgents)$. To prove
$\SPPrice<\MaxVal_\NumAgents$ observe that \cref{res:twist}
implies $\sum_{i=2}^{\NumAgents}\MaxProb_i >
\sum_{i=2}^{\NumAgents}\QFun(\MaxVal_i)-\QFun(\MaxVal_{i-1}) =
\QFun(\MaxVal_\NumAgents)$. On the other hand
$\sum_{i=2}^{\NumAgents}\MaxProb_i \le 1$. Therefore
$\QFun(\MaxVal_\NumAgents) < 1$ which implies $\MaxVal_\NumAgents >
\SPPrice$ because $\QFun(\SPPrice)=1$ and, by \cref{res:mono},
$\QFun(\cdot)$ is decreasing.
\end{proof}

We conclude the section with the proof of the upper-bound of
\cref{main-theorem}.

\begin{proof}[Proof of upper-bound in \cref{main-theorem}]
It follows from program~\eqref{eq:ratio},
\cref{res:tri_dist,res:ratio2,res:tight}, and the rest of the
discussion in this \lcnamecref{sec:upperbound} that $\RatioII$ which
is computed by \eqref{eq:ratio3} is an upper bound on the ratio of the
ex ante relaxation to the expected revenue of the optimal anonymous
pricing.  Following \cref{lastprog:lemma}, $\RatioII$ is upper bounded
by the objective value of program \eqref{eq:ratio6}, i.e. $\Ratio''$.
As $\QFun(\cdot)$ and $\HFun(\cdot)$ are decreasing (\cref{res:mono}),
the optimal solution to program~\eqref{eq:ratio6} is given by
$\Ratio''=1+\HFun(\QFun^{-1}(1))$ which numerically evaluates to $e
\approx 2.718$.
\end{proof}

\section{Lower-Bound Analysis}
\label{sec:lowerbound}

In this section we show the tightness of our approximation, i.e.\@ the lower-bound in  \cref{main-theorem}. As a result of  \cref{res:tri_dist}, it suffices to prove the following lemma.

 
\begin{lemma}
\label{lowerbound-lemma}
For any $\epsilon>0$ there exists a feasible assignment $(\NumAgents,\MaxVals,\MaxProbs)$ of the program \eqref{eq:tri_ratio} such that $\sum_{i=1}^\NumAgents \MaxVal_i\MaxProb_i\geq 1+\HFun(\QFun^{-1}(1))-\epsilon$.
\end{lemma}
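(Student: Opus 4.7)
The strategy is to discretize the continuum-of-infinitesimal-agents optimum identified in \cref{sec:upperbound} and verify that a slightly down-scaled discrete version is strictly feasible for \eqref{eq:tri_ratio}. Set $\SPPrice \triangleq \QFun^{-1}(1)$. Given $\epsilon > 0$, first fix a truncation $M > \SPPrice$ with $\HFun(M), \QFun(M) < \epsilon/5$ (possible by \cref{res:mono}), a scale $L \gg M$, and a safety factor $\lambda \triangleq 1 - \epsilon/5$. For a large $N$ define an equispaced grid $\MaxVal_j \triangleq M - (j-2)(M-\SPPrice)/(N-1)$ for $j \in \RangeMN{2}{N+1}$ (so $\MaxVal_2 = M$ and $\MaxVal_{N+1} = \SPPrice$), place a ``super-agent'' at $\MaxVal_1 \triangleq L$ with $\MaxProb_1 \triangleq 1/L$, and for each remaining $j$ set
\[
  \MaxProb_j \triangleq \lambda \cdot \frac{e^{\HFun(\MaxVal_j)-\HFun(\MaxVal_{j-1})}-1}{\MaxVal_j},
\]
which is the canonical formula \eqref{eq:max_prob} scaled by $\lambda$.

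The core of the proof is verifying feasibility in \eqref{eq:tri_ratio}. The capacity constraint follows from a telescoping estimate: $\sum_{j=2}^{N+1}\MaxProb_j = \lambda(\QFun(\SPPrice) - \QFun(L)) + O(1/N) \le \lambda + O(1/N + 1/L)$, so together with $\MaxProb_1 = 1/L$ the total stays strictly below $1$ for $L,N$ large. For the pricing constraint \eqref{eq:tri_ratio:price_cons}, since $\PriceRev$ on a triangular-revenue-curve instance is maximized at a value in $\{\MaxVal_1,\ldots,\MaxVal_{N+1}\}$, I only need $\sum_{i=1}^{k}\ln(1 + \MaxVal_i\MaxProb_i/(\MaxVal_k(1-\MaxProb_i))) \le \ln(\MaxVal_k/(\MaxVal_k-1))$ for each $k$. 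Using $\MaxProb_i = O(M/N)$, two first-order Taylor expansions yield, uniformly in $k$,
\[
  \sum_{i=1}^{k} \ln\!\Big(1 + \tfrac{\MaxVal_i\MaxProb_i}{\MaxVal_k(1-\MaxProb_i)}\Big) = \ln\!\Big(1 + \tfrac{1}{\MaxVal_k}\Big) + \lambda\cdot\tfrac{\HFun(\MaxVal_k) - \HFun(L)}{\MaxVal_k} + O\!\big(M^2/(N\SPPrice)\big) + O(1/L).
\]
Combined with the key algebraic identity
\[
  \ln\!\Big(1 + \tfrac{1}{\MaxVal_k}\Big) + \tfrac{1}{\MaxVal_k}\HFun(\MaxVal_k) = \ln\!\tfrac{\MaxVal_k}{\MaxVal_k-1},
\]
which is exactly why the relaxations of \cref{res:ratio2} are tight in the continuum limit, the LHS equals the RHS minus $(1-\lambda)\HFun(\MaxVal_k)/\MaxVal_k$ plus a vanishing error; since $\HFun(\MaxVal_k) \ge \HFun(M)$ and the dominating error is $O(M^2/(N\SPPrice))$, picking first $L$ then $N$ large enough (so the error is smaller than $(1-\lambda)\HFun(M)/M$) yields strict feasibility at every $k$.

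Finally, by the same telescoping calculation, the objective satisfies
\[
  1 + \sum_{j=2}^{N+1}\MaxVal_j\MaxProb_j = 1 + \lambda\sum_{j=2}^{N+1}\big(e^{\HFun(\MaxVal_j)-\HFun(\MaxVal_{j-1})}-1\big) = 1 + \lambda(\HFun(\SPPrice) - \HFun(L)) + O(1/N),
\]
which for $N,L$ large is at least $1 + \HFun(\SPPrice) - \epsilon$, as required. The main obstacle I expect is the uniform-in-$k$ bookkeeping of the Taylor errors in the pricing constraint: with up to $N$ terms in the sum, each contributing an $O((\MaxVal_i\MaxProb_i)^2/\MaxVal_k) = O(M^2/(N^2\SPPrice))$ deviation from the continuum identity, the aggregate error is $O(M^2/(N\SPPrice))$, and the crucial point is that this bound is independent of $k$ (thanks to the uniform lower bound $\MaxVal_k \ge \SPPrice > 1$). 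This uniformity is what allows a single safety factor $\lambda$ to absorb the slack across all $N$ pricing constraints simultaneously.
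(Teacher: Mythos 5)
Your plan mirrors the paper's own proof in its overall structure: discretize the continuum solution of \cref{sec:upperbound} on a truncated grid, prepend a ``big'' top agent to account for the $1$ in the objective, scale the probabilities down by a constant factor less than $1$, and verify (P3)-feasibility directly. The numeric bookkeeping, however, has a gap that as written would not close.

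The issue is the jump from $\MaxVal_1=L$ to $\MaxVal_2=M$. Your pricing-constraint error estimate is advertised as $O\big(M^2/(N\SPPrice)\big)+O(1/L)$, i.e.\ vanishing once $L$ and then $N$ are taken large with $M$ fixed. But $\HFun(\MaxVal_2)-\HFun(\MaxVal_1)=\HFun(M)-\HFun(L)\to\HFun(M)$ is a \emph{fixed} positive quantity, not a fine grid step. The second-order error contributed by the $i=2$ term in your Taylor expansion of $\ln\big(1+\MaxVal_2\MaxProb_2/(\MaxVal_k(1-\MaxProb_2))\big)$ is therefore $\Theta\big(\HFun(M)^2/\MaxVal_k\big)$ and does not vanish as $N,L\to\infty$. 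The same jump introduces a non-vanishing discrepancy $\HFun(M)/M-\QFun(M)>0$ in your telescoping estimate for the capacity constraint. With your choices $1-\lambda=\epsilon/5$ and $\HFun(M)<\epsilon/5$ these terms do happen to be dominated by the slack $(1-\lambda)\HFun(\MaxVal_k)/\MaxVal_k$ (since $\HFun(M)^2/2\le(1-\lambda)\HFun(M)/2$), but nothing in your argument establishes this; you would need to make the jump error explicit, bound it, and verify the constant works out, rather than folding it into error terms that are claimed to vanish but do not. If you tighten the stopping rule so that the grid step and truncation parameters are coupled to $\epsilon$ correctly, the argument goes through.

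The paper sidesteps Taylor expansion entirely. After constructing grid agents $\{(\Val_i,\Prob_i)\}$ that satisfy (P4.1) with equality, it introduces \emph{exact} correction factors $\gamma_{i,k}$ via the identity $1+\tfrac{\Val_i\Prob_i(1-\gamma_{i,k})}{\Val_k}=(1+\Val_i\Prob_i)^{1/\Val_k}$, which converts each (P4.1) summand exactly into the corresponding (P3.1) summand; it then shows $\gamma_{i,k}\to 0$ as the grid refines. It also uses $\Val_1=1/\delta-1$, $\Prob_1=\delta$ so that $\Val_1\Prob_1/(1-\Prob_1)=1$ exactly, matching the $\ln(1+1/\Val_k)$ term in the key identity $\ln(1+1/v)+\HFun(v)/v=\ln\tfrac{v}{v-1}$ without any $O(1/L)$ residue. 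Both of these choices buy you exact algebra in place of the asymptotic bookkeeping that trips up your draft, and are worth adopting if you want to carry your plan to a complete proof.
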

\begin{proof}
 Pick $\delta>0$  s.t. $(1-\delta)^2\left (1+\HFun\left(\QFun^{-1}\left (\tfrac{1}{(1+\delta)^2}\right )+\delta\right)\right )\geq 1+\HFun(\QFun^{-1}(1))-\epsilon$. This is always possible as $\HFun$ and $\QFun$ are decreasing. Lets define $\lambda=\QFun^{-1}(1)$. The proof is done in two steps:\\
 
\noindent\emph{Step 1:} We find $\{\Val_i,\Prob_i\}_{i=2}^n$ such that
\begin{align*}
 &\sum_{i=2}^{n} \Prob_i\leq 1 ~~~,~~~k=2,\ldots,n:~~\sum_{i=2}^k \ln (1+\Val_i\Prob_i)= \HFun(\Val_k)\\
 &\sum_{i=2}^n \Val_i\Prob_i\geq \HFun\left(\QFun^{-1}\left (\tfrac{1}{(1+\delta)^2}\right )+\delta\right)
\end{align*}

In our construction for $\{\Val_i,\Prob_i\}_{i=2}^n$, we use two parameters $\Delta>0$ and $V_T\geq \lambda$ which we fix later in the proof. Let $\Val_1\triangleq\infty$,  and for $i\geq 2$ set $\Val_i=V_T-(i-2)\Delta$ and $\Prob_i=\frac{e^{\HFun(\Val_{i})-\HFun({\Val_{i-1}}) }-1}{\Val_i}$. Now, let $n=\max\{n_0\in \mathbb{N}: \sum_{i=2}^{n_0}\Prob_i\leq 1\}$. Obviously, $\sum_{i=2}^{n}\Prob_i\leq 1$. Moreover, for any $2\leq k \leq n$ we have $\sum_{i=2}^k \ln(1+\Val_i\Prob_i)=\sum_{i=2}^k (\HFun(\Val_i)-\HFun(\Val_{i-1}) )=\HFun(\Val_k)-\HFun(\Val_1)=\HFun(\Val_k)$. Now, pick $\delta'>0$ small enough such that for $x\in [0,\delta']$ we have $\frac{e^x-1}{x}\leq 1+\delta$. Moreover, let $\Delta$ to be small enough and $V_T$ to be large enough such that $\max\{\HFun(\lambda)-\HFun(\lambda+\Delta), \HFun(V_T),\Delta\}\leq \min\{\delta,\delta'\}$. First observe that due to Lemma~\ref{res:twist} $\Prob_{i}\geq \QFun(\Val_i)-\QFun(\Val_{i-1})$ which implies $2\geq \sum_{i=1}^n \Prob_i\geq\QFun(\Val_n)-\QFun(\Val_1)=\QFun(\Val_n)$. So,  all values $\Val_i$ are at least equal to $\lambda$. As $\HFun(.)$ is convex over $[1,\infty)$, we have $\HFun(\Val_i)-\HFun(\Val_{i-1})\leq \HFun(\lambda)-\HFun(\lambda+\Delta)\leq\delta'$. As a result we have
\begin{align}
\Prob_i&=\frac{e^{\HFun(\Val_{i})-\HFun({\Val_{i-1}}) }-1}{\Val_i}\leq (1+\delta) \frac{{\HFun(\Val_{i})-\HFun({\Val_{i-1}}) }}{\Val_i}= (1+\delta)\int_{\Val_i}^{\Val_{i-1}}\frac{\HFun'(\MaxValC) }{\Val_i}\dif\MaxValC.\nonumber\\
&= (1+\delta)\int_{\Val_i}^{\Val_{i-1}}-\frac{\Val}{\Val_i}\QFun'(\Val)\dif\MaxValC=(1+\delta) \left(\QFun(\Val_i)-\QFun(\Val_{i-1})+\int_{\Val_i}^{\Val_{i-1}}-\frac{v-\Val_i}{\Val_i}\QFun'(\Val)\dif\Val\right)\nonumber\\
&\leq (1+\delta)\left(\QFun(\Val_i)-\QFun(\Val_{i-1})+\Delta \int_{\Val_i}^{\Val_{i-1}}-\QFun'(\Val)\dif\Val\right)\leq (1+\delta)^2(\QFun(\Val_i)-\QFun(\Val_{i-1}))\label{lemma1-lower}
\end{align}
Based on the definition of $n$ (number of distributions in our instance),  we have $1< \sum_{i=2}^{n+1}\Prob_i$. By (\ref{lemma1-lower}), we have $\sum_{i=2}^{n+1}\Prob_i\leq (1+\delta)^2 \sum_{i=2}^{n+1}((\QFun(\Val_i)-\QFun(\Val_{i-1}))=(1+\delta)^2\QFun(\Val_{n+1})$. Lets define $\lambda'\triangleq \QFun^{-1}\left (\tfrac{1}{(1+\delta)^2}\right )$. We conclude that $\lambda'\geq \Val_{n+1}$. Hence,  $\Val_n\leq \lambda'+\Delta\leq \lambda'+\delta$. Moreover, using Lemma~\ref{res:tri_dist} we have
\begin{equation}
\sum_{i=2}^n \Val_i\Prob_i\geq \sum_{i=2}^n (\HFun(\Val_i)-\HFun(\Val_{i-1}))=\HFun(\Val_n)\geq \HFun(\lambda'+\delta)=\HFun\left(\QFun^{-1}\left(\tfrac{1}{(1+\delta)^2}\right)+\delta\right)
\end{equation}
where the last inequality is true because $\Val_n\leq \lambda'+\delta$ and $\HFun$ is decreasing over $[1,\infty)$.\\

\noindent\emph{Step 2:} Given $\{\Val_i,\Prob_i\}_{i=2}^n$, we find an instance $\{\MaxVal_i,\MaxProb_i\}_{i=1}^n$ such that is feasible for program \eqref{eq:tri_ratio} and $\sum_{i=1}^n \MaxVal_i\MaxProb_i\geq 1+\HFun(\QFun^{-1}(1))-\epsilon$. To do so, set
 $\Prob_1=\delta, \Val_1=\frac{1}{\delta}-1$. Now, for each $ i,k\in[2:n]$ find $\gamma_{i,k}$ such that
\begin{equation}
\label{eq1-lower}
1+\frac{\Val_i\Prob_i(1-\gamma_{i,k})}{\Val_k}=(1+\Val_i\Prob_i)^{\frac{1}{\Val_k}}
\end{equation}
and then let $\MaxProb_i=(1-\delta)(1-\underset{k\in [2:n]}{\max}\gamma_{i,k})\Prob_i$ and $\MaxVal_i=\Val_i$, for $i\in [2:n]$.  Now we claim $\{{\MaxVal_i},\MaxProb_i\}_{i=1}^n$ is a feasible assignment for the program \eqref{eq:tri_ratio} . We have
\begin{equation}
\sum_{i=1}^{n}\MaxProb_i=\delta+(1-\delta)\sum_{i=2}^n (1-\underset{k\in [2:n]}{\max}\gamma_{i,k})\Prob_i\leq \delta+(1-\delta)\sum_{i=2}^{n}\Prob_i\leq 1.
\end{equation}
as $\sum_{i=2}^{n}\Prob_i\leq 1$. Moreover, for $k\in[2:n]$ we have
\begin{align*}
\sum_{i=1}^k\ln\left (1+\frac{\MaxVal_i\MaxProb_i}{\MaxVal_k(1-\MaxProb_i)}\right )&\leq \ln\left (1+\frac{\MaxVal_1\MaxProb_1}{\MaxVal_k(1-\MaxProb_1)}\right )+\sum_{i=2}^k\ln\left (1+\frac{\Val_i\Prob_i(1-\gamma_{i,k})}{\Val_k}\right )\nonumber\\
&=\ln\left (\frac{\Val_k+1}{\Val_k}\right )+\frac{1}{\Val_k}\sum_{i=2}^k\ln(1+\Val_i\Prob_i)\\
&\leq \ln\left (\frac{\Val_k+1}{\Val_k}\right )+\ln\left (\frac{\Val_k^2}{\Val_k^2-1}\right )\\
&=\ln\left (\frac{\MaxVal_k}{\MaxVal_k-1}\right )
\end{align*} 
By taking exponents from both sides and rearranging the terms it is not hard to see $(n,\MaxVals,\MaxProbs)$ is a feasible assignment of program \eqref{eq:tri_ratio}. Additionally,  for a fixed $V_T$ all of the $\Val_i$'s are bounded, i.e. $1\leq \Val_i\leq V_T$. So, as $\Delta$ goes to zero we have $\Val_i\Prob_i\rightarrow 0 $ as $\Prob_i\rightarrow 0$, and the left hand side of (\ref{eq1-lower}) converges to its right hand side. As a result, for small enough $\Delta$, we can guarantee  $\gamma_{i,k}\leq \delta$ for all $i,k$, and hence $\MaxProb_i\geq (1-\delta)^2 \Prob_i$. So 
\begin{align*}
\sum_{i=1}^n \MaxVal_i\MaxProb_i&\geq (1-\delta)+(1-\delta)^2\left (\sum_{i=2}^n \Val_i\Prob_i\right )\\
&\geq (1-\delta)^2 \left (1+\sum_{i=2}^n \Val_i\Prob_i\right )\\
&\geq (1-\delta^2)\left(1+\HFun\left (\QFun^{-1}\left (\tfrac{1}{(1+\delta)^2}\right )+\delta\right )\right)
\end{align*}
which implies $\sum_{i=1}^n \MaxVal_i\MaxProb_i\geq 1+\HFun(\QFun^{-1}(1))-\epsilon$, as desired.
\end{proof}

\section{Irregular inapproximability results}
\label{s:irregular}

In this section we show that anonymous pricing and anonymous reserves
are a tight $n$ approximation to the optimal auction and ex ante
relaxation.  Specifically, we show a lower bound on the approximation
factor of anonymous reserves to the optimal auction and an upper bound
on the approximation factor of anonymous pricing to the ex ante
relaxation.  The ordering of these mechanisms by revenue then implies
that all bounds are optimal and tight.

\begin{proposition}
\label{t:irregular-monop-reserve-linear-lb}
For $n$-agent, independent, non-identical, and irregular distributions
the second-price auction with anonymous reserves is at best an $n$
approximation to the optimal single-item auction.
\end{proposition}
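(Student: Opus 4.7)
The plan is to exhibit an explicit family of $n$-agent irregular instances, parameterized by a scaling $M\to\infty$, on which the optimal auction revenue approaches $n$ while every anonymous reserve yields revenue at most $1+o(1)$. Concretely, let agent $i\in\{1,\dots,n\}$ have value $M^i$ with probability $M^{-i}$ and value $0$ otherwise. Each such two-point distribution is irregular (its revenue curve jumps at quantile $M^{-i}$ and is therefore not concave), and each agent's pointmass contributes exactly $M^i\cdot M^{-i}=1$ to the ex ante objective; since $\sum_i M^{-i}\le 1$ for $M\ge 2$, the ex ante relaxation has value $n$.

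First I would lower-bound the optimal auction by analyzing a sequential discriminatory posted pricing that visits agents in decreasing order of $i$ and offers agent $i$ the price $M^i$. Conditional on reaching agent $i$, the item has already been sold only if some higher-indexed agent realized its pointmass, an event of probability $\sum_{j>i} M^{-j}=O(M^{-(i+1)})$. Hence agent $i$ contributes $1-O(M^{-(i+1)})$ to the expected revenue, and summing gives at least $n-O(M^{-2})$. Since this is a feasible mechanism, the optimal auction revenue is at least $n-o_M(1)$.

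Next I would upper-bound the revenue of any anonymous reserve $r$. Let $k=\min\{i:M^i\ge r\}$, so only agents indexed $k,\dots,n$ can clear the reserve; let $S$ denote the (random) set of such agents whose realized value equals their pointmass. The expected revenue decomposes into (a) the case $|S|=1$, where the payment equals the reserve $r$, and (b) the case $|S|\ge 2$, where the payment equals the second-largest value in $S$ (which is itself at least $r$). For case (a), using $r\le M^k$ and $\Pr[|S|=1]\le\sum_{j\ge k}M^{-j}=M^{-k}\cdot M/(M-1)$, the contribution is at most $1+O(1/M)$. For case (b), if the second-largest value equals $M^j$ then $j\in S$ and some $j'>j$ also lies in $S$, an event of probability at most $M^{-j}\sum_{j'>j}M^{-j'}=O(M^{-2j-1})$; multiplying by $M^j$ and summing over $j\ge k$ yields $O(M^{-k-1})$. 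Therefore every anonymous reserve yields revenue at most $1+O(1/M)$.

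Taking $M\to\infty$, the ratio between the optimal-auction revenue and the best anonymous-reserve revenue tends to $n$, which matches the trivial $n$ upper bound (since the best single agent's ex ante contribution is at least $1/n$ of the sum, anonymous pricing is always within factor $n$ of the ex ante relaxation, and anonymous reserves dominate anonymous pricing in revenue). The main technical obstacle is case (b): one must verify that the joint probability of two agents clearing the reserve decays fast enough to absorb the potentially large second-price they would pay, and this hinges precisely on the equal-revenue scaling $M^i\cdot M^{-i}=1$ built into the construction.
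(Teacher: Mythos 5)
Your proposal is correct and rests on exactly the paper's construction (agent $i$ has value $M^{i}$ with probability $M^{-i}$, the paper's $h$) and the paper's lower bound via the sequential posted pricing in decreasing order of index, which approaches $n$ in revenue. The only genuine divergence is in how you bound the anonymous-reserve revenue by $1+O(1/M)$: you decompose by whether the set $S$ of clearing agents has size one (winner pays the reserve $r\le M^{k}$) or size at least two (winner pays the second-highest pointmass in $S$), and bound each piece by a direct probability estimate; the paper instead decomposes per potential winner $j\ge i$ and bounds the conditional expected payment, given agent $j$ realizes $h^{j}$, by $h^{i}+(j-i)$ using the reserve plus the sum of expectations of the intermediate agents as a crude bound on a max. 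Your decomposition tracks the actual second-price payment rule more closely, and your explicit handling of an arbitrary reserve $r$ through $k=\min\{i:M^{i}\ge r\}$ is a touch cleaner than the paper's parenthetical claim that only reserves in $\{h^{i}\}$ need be considered. Both routes deliver the same $1+O(1/M)$ bound and hence the $n$ approximation ratio in the limit.
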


\begin{proof}
Consider the following value distribution
$$
v_i  =
\begin{cases}
		h^i  & \text{with probabiliy $h^{-i}$,}\\
		0 & \text{otherwise.}
\end{cases}
$$
On this distribution the ex ante relaxation has revenue
$\sum_{i=0}^{n} h^i h^{-i} = n$ (and the optimal auction is no
better).  On the other hand, anonymous reserve and anonymous pricing
of $h^i$ for any $i \in \{1,\ldots,n\}$ gives revenue at least one.
We will show that in the limit as $h$ approaches infinity; these
bounds are tight.

We first argue that in the limit of $h$ the optimal auction revenue is
$n$ (the same as the ex ante relaxation).
%
%
%
Consider the expected revenue of the following sequential posted
pricing mechanism, which gives a lower-bound on the optimal
revenue.\footnote{In fact, this sequential posted pricing mechanism is
  the optimal auction, but its optimality is unnecessary for the proof
  so we omit the details.}  In decreasing order of price and until the
first agent accepts her offered price, offer each agent $i$ price
$h^i$.  This mechanism's revenue can be calculated as:
$$
h^n\cdot \tfrac{1}{h^n}+\sum_{i=2}^n{  h^{n-i+1}}\cdot \tfrac{1}{h^{n-i+1}} \prod_{j=1}^{i-1}(1-\tfrac{1}{h^{n-j+1}}) =1+\sum_{i=2}^n{\prod_{j=1}^{i-1}(1-\tfrac{1}{h^{n-j+1}})}
$$
which converges to $n$ as $h$ goes to infinity. 

We now prove that the expected revenue of the second-price auction with any
anonymous reserve in $\{h^i\}_{i=1}^n$ is at most one in the limit.
(Any other reserve is only worse.)  In the second-price auction with
reserve, the winner pays the maximum of the highest agent value below
her value and the reserve.  An upper bound of this revenue is the sum
over all agents with values at least the reserve. So, for reserve $h^i$ the contribution of $j \geq i$ to this upper bound
is at most:
\begin{equation}
\label{eq:anon-reserve-bound}
h^{-j} \,(j-i + h^i).
\end{equation}
The first term, above, is the probability that agent $j$ has high
value $h^j$. Conditioned on her having the high value $h^j$, the
second term bounds the agent's payment, the expected maximum of the
highest lower-valued agent and the reserve $h^i$.  It is at most $j-i
+ h^i$ as each agent between $i$ and $j$ has expected value one and
the expectation of their maximum is at most the sum of their
expectations.  It follows from equation~\eqref{eq:anon-reserve-bound} that in the limit with $h$, the contribution
from agent $i$ to this bound is one and the contribution from agent $j
\neq i$ is zero.  Thus, the expected revenue of the second-price
auction with reserve $h^i$ is at most one in the limit.
\end{proof}

\begin{proposition} 
\label{t:irregular-monop-pricing-linear-ub}
For independent, non-identical, irregular $n$-agent single-item
environments, anonymous pricing is at worst an $n$ approximation
to the ex ante relaxation.
\end{proposition}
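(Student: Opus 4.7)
The plan is to use the ex-ante optimal probabilities to construct $n$ candidate anonymous prices, and then show that at least one of them recovers a $1/n$ fraction of the ex-ante revenue via an easy averaging argument.

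First I would fix an optimal assignment $(q_1^*,\ldots,q_n^*)$ for the ex-ante relaxation program \eqref{eq:bench_rev}, so that $\BenchRev(\Instance)=\sum_{i=1}^{\NumAgents} \Rev_i(q_i^*)$ with $\sum_i q_i^* \le 1$. For each agent $i$ let $p_i \triangleq \CDF_i^{-1}(1-q_i^*)$ be the corresponding posted price, so that by definition $\Rev_i(q_i^*) = p_i\, q_i^*$ and $1 - \CDF_i(p_i) \ge q_i^*$ (using the standard quantile convention, which handles possible pointmasses in the irregular case).

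The key observation is that for any fixed price $p$, the sale probability of the anonymous price is $1-\prod_j \CDF_j(p)$, and since every $\CDF_j(p)\le 1$, we have
\begin{align*}
  1-\prod_{j=1}^{\NumAgents} \CDF_j(p) \;\ge\; 1-\CDF_i(p) \qquad \forall i.
\end{align*}
Applied at $p=p_i$, this yields $\PriceRev(\Instance, p_i) = p_i\bigl(1-\prod_j \CDF_j(p_i)\bigr) \ge p_i\, q_i^* = \Rev_i(q_i^*)$. So each agent's ex-ante contribution is individually dominated by an anonymous pricing at her own ex-ante price.

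Finally, I would take the maximum over $i\in\RangeN{\NumAgents}$: since $\OptPriceRev(\Instance) \ge \max_i \PriceRev(\Instance, p_i) \ge \max_i \Rev_i(q_i^*) \ge \tfrac{1}{\NumAgents}\sum_{i=1}^{\NumAgents}\Rev_i(q_i^*) = \tfrac{1}{\NumAgents}\BenchRev(\Instance)$, the claim follows. There is no real obstacle here; the only minor subtlety is formalizing the argument when $\CDF_i$ has pointmasses (so that $\CDF_i^{-1}(1-q_i^*)$ may need to be defined as the left-continuous quantile), but this is routine.
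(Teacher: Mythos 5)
Your proof is correct and is essentially the same as the paper's: both fix the ex ante optimal quantiles, observe that posting each corresponding price $p_i$ yields anonymous-pricing revenue at least $\Rev_i(q_i^*)$ because the aggregate sale probability dominates any single agent's, and then conclude by averaging (you take a max, the paper picks a uniformly random price — same bound).
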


\begin{proof}
Define $\{(\MaxVal_i,\MaxProb_i)\}_{i=1}^n$ as in
equation~\eqref{eq:tri_bench_rev} in \cref{sec:prelim} where the ex ante relaxation posts
price $\MaxVal_i$ to agent $i$ which is accepted with probability
$\MaxProb_i$ and has total revenue $\sum\nolimits_{i=1}^\NumAgents
\MaxVal_i\,\MaxProb_i$.  For any $i$ the anonymous pricing that posts
price $\MaxVal_i$ obtains at least revenue $\MaxVal_i\,\MaxProb_i$.
Thus, picking a uniformly random price from $\{\MaxVal_i\}_{i=1}^{n}$ gives
an $n$ approximation to the ex ante relaxation revenue
$\sum\nolimits_{i=1}^\NumAgents \MaxVal_i\,\MaxProb_i$.  The optimal
anonymous price is no worse.
\end{proof}



%


\section{Simulation results}
\label{appendix:sim}
In this section, we briefly discuss simulation results for the
worst-case instance derived in section \secref{sec:upperbound}.  From
these simulations we will see how fast, as a function of the number
$n$ of agents, the worst-case ratio of the ex ante relaxation to the
expected revenue of optimal anonymous pricing converges to $e$.
Moreover, for these worst-case instances we will be able to evaluate
the approximation of the optimal auction by anonymous reserves and
pricing.

Our worst-case instances are given by a continuum of agents (as given
by $\HFun(\cdot)$ and $\QFun(\cdot)$) which we discritize by
evaluating $\QFun(\cdot)$ on an arithmetic progression, denoted
$\{\MaxProb_i\}$.  Given the fast convergence that our simulation
exhibits, there is little loss in this discritization.

According to the price revenue constraint in \eqref{eq:tri_ratio}, we
know that the revenue by posting price $\MaxVal_i$ for every $i$
should be equal to the optimal posting price revenue. Thus, when we
have $\{\MaxProb_i\}$, it is simple to get the corresponding
$\MaxVal_i$ by binary search and calculating the revenue.

After generating the instances, we also calculate the ratio of the
revenue of the optimal mechanism to the anonymous pricing revenue, and
the ratio of the revenue of the optimal mechanism to the revenue of
the second price with anonymous reserve mechanism for these
instances. We use sampling algorithm to calculate the revenue of the
second price with anonymous reserve mechanism, while the calculation
of the revenue of the optimal mechanism is exact. We report the
results of our simulation in \cref{tab:sim,fig:sim} for various
numbers $n$ of agents.

\begin{figure}[ht]
\centering
\begin{tabular}{|c|c|c|c|c|c|c|c|c|}
\hline
n                          &2& 5&10 & 50&100& 500    & 1000   & 5000    \\
\hline
$ \BenchRev/\OptPriceRev $   &2.000& 2.507&  2.622& 2.701& 2.710&
2.717 &
 2.718&
2.718  \\\hline
$ \textsc{OptRev}/\OptPriceRev $      &2.000& 2.138 & 2.187 & 2.223&2.227 & 2.231&2.231& 2.232\\
\hline
$ \textsc{OptRev}/\textsc{OptReserveRev} $      &2.000& 1.794   & 1.731     &   1.682 &1.676&
1.665&1.659&1.607 \\
\hline
\end{tabular}
\caption{The ratios of the revenues of various auctions and
  benchmarks. Here $\BenchRev$ and $\OptPriceRev$ are the ex ante
  relaxation and optimal anonymous pricing revenues (as previously
  defined). $\textsc{OptRev}$ is the revenue of the optimal auction of
  \citet{mye-81}. $\textsc{OptReserveRev}$ is the revenue obtained by
  the second-price auction with an optimally chosen reserve price.}
\label{tab:sim}
\end{figure}


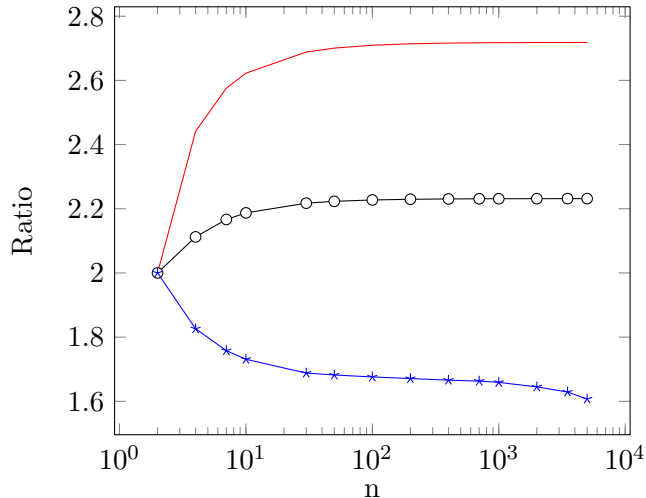
\begin{figure}[ht]
\centering
\begin{tikzpicture}

\begin{semilogxaxis}[
  xlabel=n,
  ylabel=Ratio,
  cycle list={%
{red},
{black,mark=*,fill=white},
{blue,mark=star}}
]
\addplot table [y=B, x=A]{tmp.txt};
\addplot table [y=C, x=A]{tmp.txt};
\addplot table [y=D, x=A]{tmp.txt};
\end{semilogxaxis}
\end{tikzpicture}

\caption{The ratios of the revenues of various auctions and benchmarks.
The red sold line represents the ratio of ex-ante benchmark to anonymous pricing revenue. The blue line with star represents the ratio of the optimal revenue to anonymous pricing revenue. The black line with circle represents the ratio of the optimal revenue to 
the revenue of the second price auction with reserve.
}
\label{fig:sim}
\end{figure}
We conclude this section by highlighting the following theorem, derived from the table in \cref{tab:sim}.
\begin{theorem}
There exists an instance for which anonymous pricing is a 2.232 approximation to the optimal auction.
\end{theorem}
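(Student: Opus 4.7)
The plan is to exhibit a large-$n$ instance drawn from the worst-case construction of \cref{sec:upperbound} and then to numerically compute the ratio of Myerson's optimal revenue to the optimal anonymous pricing revenue on that instance. The construction is already essentially in hand: take the discretization of the continuum solution to program \eqref{eq:ratio6}, namely the triangular revenue curve instance $\{\TriDist(\MaxVal_i,\MaxProb_i)\}_{i=1}^{\NumAgents}$ where the $\MaxVal_i$ are obtained by evaluating $\QFun^{-1}$ at an arithmetic progression of capacities in $[0,1]$ and $\MaxProb_i$ is determined by equation~\eqref{eq:max_prob}. By \cref{res:tri_dist,res:tight,lastprog:lemma} (and the lower bound of \cref{lowerbound-lemma}) this instance makes $\BenchRev/\OptPriceRev$ arbitrarily close to $e$ as $\NumAgents\to\infty$; after normalizing $\OptPriceRev=1$, it remains to bound $\textsc{OptRev}$ from below by $2.232$ for some finite $\NumAgents$.

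First I would fix a moderately large number of agents (the simulations in \cref{tab:sim} suggest $\NumAgents=5000$ suffices), build the corresponding triangular instance, and verify by a direct calculation of \eqref{eq:tri_price_rev} that $\OptPriceRev$ is within a tiny tolerance of $1$. Next I would compute $\textsc{OptRev}$ on this instance. Because each agent's distribution is a triangle it is regular, so Myerson's optimal auction is the virtual-welfare-maximizing auction with no ironing; moreover each $\TriDist(\MaxVal_i,\MaxProb_i)$ has a pointmass at $\MaxVal_i$ and a smooth regular tail below it, so the event structure decomposes cleanly into ``which agents realize their pointmass'' and ``among those who don't, what is their virtual value.'' This makes the optimal expected revenue exactly computable: one enumerates subsets of ``atoms fired'' (or, more efficiently, processes agents in decreasing order of $\MaxVal_i$ and computes the probability that the winning atom is $\MaxVal_i$ together with the expected threshold payment charged to the winner).

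The main obstacle is purely computational accuracy: one must be confident that the discretization error in approximating the continuum $[\SPPrice,\infty)$ by $\NumAgents$ atoms, plus whatever numerical error arises in computing $\textsc{OptRev}$ over a product of $\NumAgents$ mixed discrete-continuous distributions, does not push the reported ratio below $2.232$. I would address this by (i) proving a monotone-in-$\NumAgents$ convergence statement, so that finer discretizations only tighten the ratio, and (ii) bounding the tail contribution to $\textsc{OptRev}$ analytically using $\HFun(\MaxVal_\NumAgents)\to 0$. Since the rightmost column of \cref{tab:sim} already reports $\textsc{OptRev}/\OptPriceRev=2.232$ at $\NumAgents=5000$ with $\OptPriceRev$ computed exactly, the existence claim of the theorem then follows by pointing to that instance. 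In short, the proof is a one-line appeal to the $\NumAgents=5000$ row of \cref{tab:sim}, with the substantive content being the verification that the numerics there constitute a rigorous lower bound on $\textsc{OptRev}/\OptPriceRev$ for the specified instance.
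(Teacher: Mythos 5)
Your proposal takes essentially the same route as the paper: the paper's ``proof'' is precisely the appeal to the $\NumAgents=5000$ row of \cref{tab:sim} for the discretized worst-case triangular instance, where $\textsc{OptRev}$ is computed exactly by Myerson's characterization and $\OptPriceRev$ is normalized to one. Your additional remarks on discretization error, monotone convergence in $\NumAgents$, and tail bounds correctly identify what would be needed to upgrade the numerics to a fully rigorous lower bound, but the construction and the appeal to the table are the same as in the paper.
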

\section*{Acknowledgment}
The authors would like to express their very great appreciations to Prof. Robert D. Kleinberg for his valuable and constructive suggestions.

\bibliographystyle{apalike}
\bibliography{auctions}

\begin{thebibliography}{}

\bibitem[Alaei, 2011]{ala-11}
Alaei, S. (2011).
\newblock Bayesian combinatorial auctions: Expanding single buyer mechanisms to
  many buyers.
\newblock In {\em Proceedings of the 52nd Annual Symposium on Foundations of
  Computer Science}, pages 512--521. IEEE.

\bibitem[Alaei et~al., 2013]{AFHH-13}
Alaei, S., Fu, H., Haghpanah, N., and Hartline, J. (2013).
\newblock The simple economics of approximately optimal auctions.
\newblock In {\em Foundations of Computer Science (FOCS), 2013 IEEE 54th Annual
  Symposium on}, pages 628--637. IEEE.

\bibitem[Bulow and Klemperer, 1996]{BK-96}
Bulow, J. and Klemperer, P. (1996).
\newblock Auctions versus negotiations.
\newblock {\em American Economic Review}, 86:180--194.

\bibitem[Cai and Daskalakis, 2011]{CD-11}
Cai, Y. and Daskalakis, C. (2011).
\newblock Extreme-value theorems for optimal multidimensional pricing.
\newblock In {\em Proceedings of the 52nd Annual Symposium on Foundations of
  Computer Science}, pages 522--531. IEEE.

\bibitem[Chawla et~al., 2007]{CHK-07}
Chawla, S., Hartline, J., and Kleinberg, R. (2007).
\newblock Algorithmic pricing via virtual valuations.
\newblock In {\em Proc. 8th ACM Conf. on Electronic Commerce}.

\bibitem[Chawla et~al., 2010a]{CHMS-10}
Chawla, S., Hartline, J., Malec, D., and Sivan, B. (2010a).
\newblock Sequential posted pricing and multi-parameter mechanism design.
\newblock In {\em Proc. 41st ACM Symp. on Theory of Computing}.

\bibitem[Chawla et~al., 2010b]{CMS-10}
Chawla, S., Malec, D., and Sivan, B. (2010b).
\newblock The power of randomness in bayesian optimal mechanism design.
\newblock In {\em Proc. 11th ACM Conf. on Electronic Commerce}, pages 149--158.

\bibitem[Chen et~al., 2014]{CGL-14}
Chen, N., Gravin, N., and Lu, P. (2014).
\newblock Optimal competitive auctions.
\newblock In {\em Proceedings of the 46th Annual ACM Symposium on Theory of
  Computing}, pages 253--262. ACM.

\bibitem[Goldberg et~al., 2006]{GHKSW-06}
Goldberg, A., Hartline, J., Karlin, A., Saks, M., and Wright, A. (2006).
\newblock Competitive auctions.
\newblock {\em Games and Economic Behavior}, 55:242--269.

\bibitem[Haghpanah and Hartline, 2014]{HH-14}
Haghpanah, N. and Hartline, J. (2014).
\newblock Reverse mechanism design.
\newblock {\em arXiv preprint arXiv:1404.1341}.

\bibitem[Hartline, 2013]{har-13}
Hartline, J. (2013).
\newblock Bayesian mechanism design.
\newblock {\em Foundations and Trends{\textregistered} in Theoretical Computer
  Science}, 8(3):143--263.

\bibitem[Hartline and Roughgarden, 2009]{HR-09}
Hartline, J. and Roughgarden, T. (2009).
\newblock Simple versus optimal mechanisms.
\newblock In {\em Proc. 10th ACM Conf. on Electronic Commerce}.

\bibitem[Myerson, 1981]{mye-81}
Myerson, R. (1981).
\newblock Optimal auction design.
\newblock {\em Mathematics of Operations Research}, 6:58--73.

\bibitem[Yan, 2011]{yan-11}
Yan, Q. (2011).
\newblock Mechanism design via correlation gap.
\newblock In {\em Proceedings of the Twenty-Second Annual ACM-SIAM Symposium on
  Discrete Algorithms}, pages 710--719. SIAM.

\end{thebibliography}
\appendix

\section{Missing Proofs from \secref{sec:upperbound}}

\phantomsection
\label{proof:res:ratio2}
\begin{lemma*}[\ref{res:ratio2}]
The value of program \eqref{eq:tri_ratio2}, denoted by $\RatioII$, is an upper bound on the value of program
\eqref{eq:tri_ratio} which is $\Ratio$.
\end{lemma*}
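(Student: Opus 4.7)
The plan is to map any feasible assignment $(\MaxVals, \MaxProbs)$ of program~\eqref{eq:tri_ratio} to an assignment of program~\eqref{eq:tri_ratio2} with at least as large an objective value, following the four-step relaxation sketched in \cref{sec:upperbound}. First, re-index agents so that $\MaxVal_1 \geq \cdots \geq \MaxVal_\NumAgents$ and observe that the pricing revenue $\PriceRev(\Instance, \PPrice)$ on each interval $(\MaxVal_{k+1}, \MaxVal_k]$ has its supremum at the right endpoint $\PPrice = \MaxVal_k$ (the CDF factors $F_i(\PPrice)$ are continuous on the interval while the revenue multiplier $\PPrice$ grows linearly). Hence constraint~\eqref{eq:tri_ratio:price_cons} need only be imposed at $\PPrice \in \{\MaxVal_k\}_{k=1}^\NumAgents$; dropping the rest is a relaxation. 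Rewriting the surviving constraint at $\PPrice = \MaxVal_k$ in multiplicative form gives $\prod_{i=1}^k \bigl(1 + \tfrac{\MaxVal_i \MaxProb_i}{\MaxVal_k(1 - \MaxProb_i)}\bigr) \leq \tfrac{\MaxVal_k}{\MaxVal_k - 1}$.

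Next I carry out the two algebraic relaxations. Dropping the $(1 - \MaxProb_i)$ factor from each denominator shrinks each LHS factor, yielding a weaker constraint. Passing to logarithms and using the elementary inequality $\ln(1 + b/a) \geq \tfrac{1}{a}\ln(1 + b)$ for $a \geq 1, b \geq 0$ (equivalent via exponentiation to Bernoulli's inequality $(1 + b/a)^a \geq 1 + b$) on the $i \geq 2$ terms shrinks the LHS further. After rearrangement the resulting constraint is
\[
\sum_{i=2}^k \ln(1 + \MaxVal_i \MaxProb_i) \;\leq\; \MaxVal_k \ln\frac{\MaxVal_k^2}{(\MaxVal_k - 1)(\MaxVal_k + \MaxVal_1 \MaxProb_1)}.
\]

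The final step is to replace $\MaxVal_1 \MaxProb_1$ by $1$ in both the objective $\sum_i \MaxVal_i \MaxProb_i$ and in the constraint above. For the objective, constraint~\eqref{eq:tri_ratio:price_cons} evaluated at $\PPrice = \MaxVal_1$ gives $\MaxVal_1 \MaxProb_1 \leq 1$, so replacing $\MaxVal_1 \MaxProb_1$ by $1$ only enlarges the objective to $1 + \sum_{i \geq 2} \MaxVal_i \MaxProb_i$. For the constraint, the algebraic identity $(\MaxVal_k - 1)(\MaxVal_k + 1) = \MaxVal_k^2 - 1$ turns the RHS into $\HFun(\MaxVal_k) = \MaxVal_k \ln(\MaxVal_k^2/(\MaxVal_k^2 - 1))$, matching \eqref{firstconstraintP4} exactly. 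The hard part will be justifying the substitution without loss of generality: when $\MaxVal_1 \MaxProb_1 < 1$, the substitution simultaneously tightens the constraint and improves the objective, so it cannot be applied pointwise on any single feasible assignment. I would discharge this by a limit argument: given any feasible assignment with $\MaxVal_1 \MaxProb_1 < 1$, modify the top agent by sending $\MaxVal_1 \to \infty$ with $\MaxProb_1 = 1/\MaxVal_1 \to 0$ so that $\MaxVal_1 \MaxProb_1 \to 1$; in this limit the $i = 1$ factor $1 + \MaxVal_1\MaxProb_1/(\MaxVal_k(1-\MaxProb_1))$ appearing in each $k \geq 2$ constraint converges to $1 + 1/\MaxVal_k$, precisely the value produced by setting $\MaxVal_1 \MaxProb_1 = 1$ and $\MaxProb_1 = 0$. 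Since the supremum defining $\Ratio$ is approached by such limiting instances, assembling the four steps yields $\Ratio \leq \RatioII$ as required.
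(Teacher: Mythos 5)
You correctly reproduce the first three relaxations from the paper's sketch (restricting prices to $\{\MaxVal_k\}$, dropping $(1-\MaxProb_i)$, applying $\ln(1+b/a)\ge\tfrac1a\ln(1+b)$), and you correctly flag the real difficulty: when $\MaxVal_1\MaxProb_1<1$, replacing it by $1$ simultaneously enlarges the objective and shrinks the right-hand side $\MaxVal_k\ln\frac{\MaxVal_k^2}{(\MaxVal_k-1)(\MaxVal_k+\MaxVal_1\MaxProb_1)}$ down to $\HFun(\MaxVal_k)$, so it is not a relaxation. But the limit argument you propose to fill this gap does not work.

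The problem is that sending $(\MaxVal_1,\MaxProb_1)\to(V,1/V)$ is a modification of the \emph{instance}, and it does not preserve feasibility for program \eqref{eq:tri_ratio}. In the constraint \eqref{eq:tri_ratio:price_cons} at $\PPrice=\MaxVal_k$ for $k\ge2$, the $i=1$ factor changes from $1+\tfrac{\MaxVal_1\MaxProb_1}{\MaxVal_k(1-\MaxProb_1)}$ to (in the limit) $1+\tfrac1{\MaxVal_k}$. These are ordered the wrong way: whenever $\tfrac{\MaxVal_1\MaxProb_1}{1-\MaxProb_1}<1$ (e.g.\ $\MaxVal_1=2$, $\MaxProb_1=0.1$) the new factor is \emph{larger}, so a constraint that held with equality before the modification is violated after it. You would therefore need to separately argue that the supremum of \eqref{eq:tri_ratio} is approached by assignments satisfying $\MaxVal_1\MaxProb_1\approx 1$, and that claim is not self-evident; it is essentially the content of the missing step, not a shortcut around it.

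The paper closes this gap differently: rather than touching agent $1$, it re-indexes. Given a feasible $(\MaxVals',\MaxProbs')$ with $\sum_i\MaxVal'_i\MaxProb'_i>1$, it finds the smallest $j$ with $\sum_{i\le j}\MaxVal'_i\MaxProb'_i>1$, peels off exactly revenue $1$ from agents $1,\dots,j$ (splitting agent $j$'s probability mass so the prefix sums to exactly $1$), and takes agents $j,\dots,\NumAgents$ (with the split remainder) as the new assignment indexed from $2$. The key algebraic fact making this compatible with the pricing constraint is \cref{lem:logsum}, which shows $\ln(1+b)+\ln(1+a)\le\sum_i\ln(1+z_i)$ when $a+b=\sum_i z_i$ and $a\le z_m\le b$: the merged ``$1$'' (giving $b=1/\MaxVal_k$) plus the split remainder of agent $j$ (giving $a$) never costs more, on the log scale, than the sum of the original $j$ factors did. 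Your proposal has no analogue of this grouping and so cannot handle the case where multiple top agents collectively, but no single one of them, account for revenue $1$.
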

\begin{proof}
Let $(\MaxVals',\MaxProbs')$ be any feasible assignment for program~\eqref{eq:tri_ratio} for which $\sum_{i=1}^\NumAgents
\MaxVal'_i\MaxProb'_i > 1$. We construct a corresponding assignment $(\MaxVals,\MaxProbs)$ which is feasible for
program~\eqref{eq:tri_ratio2} and yields the same objective value, that is $1+\sum_{i=2}^\NumAgents
\MaxVal_i\MaxProb_i=\sum_{i=1}^\NumAgents \MaxVal'_i\MaxProb'_i$ which then implies that $\BenchRatioI \le
\BenchRatioII$. Without loss of generality assume $\MaxVal'_1\ge \ldots \ge \MaxVal'_\NumAgents$. Let $j$ be the
smallest index for which $\sum_{i=1}^j \MaxVal'_i\MaxProb'_i > 1$. Observe that $2 \le j \le \NumAgents$ because
$\MaxVal'_i\MaxProb'_i \le 1$ for all $i$. Let $\delta=1-\sum_{i=1}^{j-1} \MaxVal'_i\MaxProb'_i$. Observe that $0 \le
\delta < \MaxVal'_j\MaxProb'_j$. We construct a new optimal assignment $(\MaxVals,\MaxProbs)$ by setting for each $i \in
\RangeMN{2}{\NumAgents}$:
\begin{align*}
        \MaxVal_i &=
            \begin{cases}
            \MaxVal'_{j+i-2}      & 2 \le i \le \NumAgents-j+2 \\
            1                   & \NumAgents-j+3 \le i \le \NumAgents
            \end{cases} &
        \MaxProb_i &=
            \begin{cases}
            \MaxProb'_{j}-\frac{\delta}{\MaxVal'_j}  & i=2 \\
            \MaxProb'_{j+i-2}                       & 3 \le i \le \NumAgents-j+2 \\
            0                                   & \NumAgents-j+3 \le i \le \NumAgents
            \end{cases}.
    \end{align*}
By the above construction it is easy to see that $1+\sum_{i=2}^\NumAgents \MaxVal_i\MaxProb_i=\sum_{i=1}^\NumAgents
\MaxVal'_i\MaxProb'_i$. So we only need to show that $(\MaxVals,\MaxProbs)$ is indeed a feasible assignment. 
Observe that $\sum_{i=2}^\NumAgents \MaxProb_i \le \sum_{i=1}^\NumAgents \MaxProb'_i$, so the second constraint holds. So we
only need to show that $(\MaxVals,\MaxProbs)$ satisfies the constraint (\ref{firstconstraintP4}).

By rearranging \eqref{eq:tri_ratio:price_cons} we get
\begin{align}
  \prod_{i:\MaxVal'_i \ge \PPrice}\left(1+\frac{\MaxVal'_i \MaxProb'_i}{\PPrice\cdot(1-\MaxProb'_i)}\right) &\le
      \left(\frac{\PPrice}{\PPrice-1}\right) & & \forall \PPrice > 0. \notag \\
\intertext{We then relax the previous inequality by dropping the term $(1-\MaxProb'_i)$ from the denominator of the left hand side and
    take the logarithm of both sides to get}
  \sum_{i:\MaxVal'_i \ge \PPrice} \ln\left(1+\frac{\MaxVal'_i \MaxProb'_i}{\PPrice}\right) &\le
      \ln\left(\frac{\PPrice}{\PPrice-1}\right) & & \forall \PPrice > 0. \label{eq:barrier1}
\end{align}
On the other hand, by invoking~\cref{lem:logsum} we can argue
that~\footnote{We invoke \cref{lem:logsum} by setting
$b=\frac{1}{\MaxVal_k},a=\frac{\MaxVal_2\MaxProb_2}{\MaxVal_k},m=j$ and $z_i=\frac{\MaxVal'_i\MaxProb'_i}{\MaxVal_k}$.}
\begin{align}
    \ln\left(1+\frac{1}{\MaxVal_k}\right)+\sum_{i=2}^k \ln\left(1+\frac{\MaxVal_i \MaxProb_i}{\MaxVal_k}\right) & \le
         \sum_{i=1}^{\min(k+j-2,\NumAgents)} \ln\left(1+\frac{\MaxVal'_i \MaxProb'_i}{\MaxVal_k}\right) 
         & & \forall k \in \RangeMN{2}{\NumAgents}. \label{eq:revamp}
\end{align}
Observe that for any given $k$ the the right hand side of \eqref{eq:revamp} is equal or less than the the left hand side of
\eqref{eq:barrier1} for $\PPrice=\MaxVal_k$ which implies
\begin{align*}
    \ln\left(1+\frac{1}{\MaxVal_k}\right)+\sum_{i=2}^k \ln\left(1+\frac{\MaxVal_i \MaxProb_i}{\MaxVal_k}\right) &
    \le  \ln\left(\frac{\MaxVal_k}{\MaxVal_k-1}\right) & & \forall k \in \RangeMN{2}{\NumAgents}.
\end{align*}
We can then further relax the above inequality by replacing the terms $\ln(1+\frac{\MaxVal_i \MaxProb_i}{\MaxVal_k})$
with $\frac{1}{\MaxVal_k}\ln(1+\MaxVal_i \MaxProb_i)$ and rearranging the terms to get the constraint (\ref{firstconstraintP4})
which implies that $(\MaxVals,\MaxProbs)$ is feasible with respect to that constraint as well.
\end{proof}

\phantomsection
\label{proof:res:mono}
\begin{lemma*}[\ref{res:mono}]
The functions $\HFun(\PPrice)$, $\QFun(\PPrice)$, and $\HFun(\PPrice)-\QFun(\PPrice)$ are all decreasing in $\PPrice$,
for $\PPrice > 1$.
\end{lemma*}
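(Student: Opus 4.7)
The plan is to prove all three monotonicity statements by directly differentiating and showing each derivative is negative on $\PPrice > 1$; the key observation is that once I handle $\HFun'(\PPrice) < 0$, the other two derivatives reduce to positive multiples of $\HFun'(\PPrice)$, so all three statements collapse to the same underlying inequality.

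First I would rewrite $\HFun(\PPrice) = 2\PPrice \ln \PPrice - \PPrice \ln(\PPrice^2-1)$ and differentiate to get
\[
\HFun'(\PPrice) \;=\; \ln\!\left(\tfrac{\PPrice^2}{\PPrice^2-1}\right) - \tfrac{2}{\PPrice^2-1}.
\]
Setting $t = \PPrice^2/(\PPrice^2-1) > 1$ for $\PPrice > 1$, this becomes $\HFun'(\PPrice) = \ln t - 2(t-1)$. By the elementary inequality $\ln t \leq t - 1$ with equality only at $t = 1$, we get $\HFun'(\PPrice) \leq -(t-1) < 0$ for $\PPrice > 1$. This yields monotonicity of $\HFun$.

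Next, by the fundamental theorem of calculus applied to the definition $\QFun(\PPrice) = \int_{\PPrice}^{\infty} -\HFun'(v)/v \, \dif v$, I get $\QFun'(\PPrice) = \HFun'(\PPrice)/\PPrice$, which is negative for $\PPrice > 1$ by the first step. Finally, combining these two computations gives
\[
(\HFun - \QFun)'(\PPrice) \;=\; \HFun'(\PPrice) - \tfrac{\HFun'(\PPrice)}{\PPrice} \;=\; \HFun'(\PPrice)\cdot\tfrac{\PPrice-1}{\PPrice},
\]
which is the product of a negative factor and a positive factor for $\PPrice > 1$, hence negative.

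The only subtle step is the scalar inequality for $\HFun'$; everything else is mechanical. I do not anticipate a real obstacle, since $\ln t \leq t - 1$ is standard and the reformulation in terms of $t$ makes the sign transparent. One should also briefly verify that the improper integral defining $\QFun$ converges (which follows because $\HFun'(v) = O(1/v^3)$ as $v\to\infty$), so that the FTC step is justified; this is a trivial side remark rather than a real difficulty.
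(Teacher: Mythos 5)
Your proof is correct and takes essentially the same route as the paper: differentiate $\HFun$, apply $\ln(1+x)\le x$ to conclude $\HFun'<0$, then observe $\QFun' = \HFun'/\PPrice$ and $(\HFun-\QFun)' = \HFun'\cdot(\PPrice-1)/\PPrice$ are positive multiples of $\HFun'$. The substitution $t=\PPrice^2/(\PPrice^2-1)$ is only a cosmetic reparametrization of the same elementary inequality, and your convergence remark is correct but not strictly needed.
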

\begin{proof}
To prove $\HFun(\PPrice)$ is decreasing, we show its derivative is negative:
\begin{align*}
    \HFun'(\PPrice)
            &= \ln\left(1+\frac{1}{\PPrice^2-1}\right)-\frac{2}{\PPrice^2-1}
            < \frac{1}{\PPrice^2-1}-\frac{2}{\PPrice^2-1}
            <0.
\end{align*}
The first inequality uses the fact that $\ln(1+x)\le x$. Similarly $\QFun(\PPrice)$ is also decreasing because
$ \QFun'(\PPrice) = \frac{1}{\PPrice} \HFun'(\PPrice) < 0$. Finally
$\HFun(\PPrice)-\QFun(\PPrice)$ is also decreasing because
\begin{align*}
  \left(\HFun(\PPrice)-\QFun(\PPrice)\right)' &= \HFun'(\PPrice)\left(1-\frac{1}{\PPrice}\right) <
    0.
\end{align*}

\end{proof}

\phantomsection
\label{proof:res:hbound}
\begin{lemma*}[\ref{res:hbound}]
$
\HFun(\PPrice)-\HFun(\PPrice') < \ln(\frac{\PPrice}{\PPrice-1})-\ln(\frac{\PPrice'}{\PPrice'-1})$ for any $\PPrice'>
\PPrice > 1$.
\end{lemma*}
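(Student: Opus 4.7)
The plan is to reduce the two-variable inequality to a one-variable monotonicity statement. Define
\[
  g(\PPrice) \triangleq \HFun(\PPrice) - \ln\!\left(\frac{\PPrice}{\PPrice-1}\right).
\]
Rearranging the claimed inequality shows it is equivalent to $g(\PPrice) < g(\PPrice')$ whenever $1 < \PPrice < \PPrice'$, so it suffices to prove that $g$ is strictly increasing on $(1,\infty)$.

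To do this I would compute $g'(\PPrice)$ explicitly. From the proof of \cref{res:mono} we already know
\[
  \HFun'(\PPrice) = \ln\!\left(\frac{\PPrice^2}{\PPrice^2-1}\right) - \frac{2}{\PPrice^2-1},
\]
and a direct computation gives $\frac{d}{d\PPrice}\ln(\PPrice/(\PPrice-1)) = -1/[\PPrice(\PPrice-1)]$. Combining these and simplifying the rational part by partial fractions (using $\frac{1}{\PPrice(\PPrice-1)} - \frac{2}{\PPrice^2-1} = -\frac{1}{\PPrice(\PPrice+1)}$) yields
\[
  g'(\PPrice) \;=\; \ln\!\left(1 + \frac{1}{\PPrice^2-1}\right) - \frac{1}{\PPrice(\PPrice+1)}.
\]

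The remaining task is to show $g'(\PPrice) > 0$ for all $\PPrice > 1$. I would invoke the elementary inequality $\ln(1+x) > \frac{2x}{2+x}$, valid for all $x > 0$. Setting $x = 1/(\PPrice^2-1)$, the right-hand side becomes $\frac{2}{2\PPrice^2-1}$, so it suffices to check $\frac{2}{2\PPrice^2-1} \ge \frac{1}{\PPrice(\PPrice+1)}$, which cross-multiplies to $2\PPrice(\PPrice+1) \ge 2\PPrice^2-1$, i.e.\ $2\PPrice \ge -1$, which is trivially true. Hence $g'(\PPrice) > 0$, $g$ is strictly increasing, and the lemma follows.

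The main subtlety is picking a sufficiently sharp elementary bound on $\ln(1+x)$: both $\ln(1+1/(\PPrice^2-1))$ and $1/[\PPrice(\PPrice+1)]$ decay like $1/\PPrice^2$ as $\PPrice \to \infty$, so the coarse bounds $\ln(1+x) \le x$ or $\ln(1+x) \ge x - x^2/2$ would either go the wrong way or lose too much; the bound $\ln(1+x) > 2x/(2+x)$ is tight enough at second order to leave room for the comparison.
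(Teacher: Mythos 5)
Your proposal is correct and, up to the final step, mirrors the paper's proof exactly: you form the same function $G(\PPrice) = \HFun(\PPrice) - \ln(\PPrice/(\PPrice-1))$, reduce the claim to strict monotonicity of $G$, and your computation $G'(\PPrice) = \ln\!\big(1 + \tfrac{1}{\PPrice^2-1}\big) - \tfrac{1}{\PPrice(\PPrice+1)}$ matches theirs. Where you diverge is in how to certify $G'(\PPrice) > 0$. The paper does this indirectly, showing $G''(\PPrice) = -(3\PPrice+1)/\big((\PPrice-1)\PPrice^2(\PPrice+1)^2\big) < 0$ so that $G'$ is decreasing, and then appealing to $\lim_{\PPrice\to\infty} G'(\PPrice) = 0$ to force $G' > 0$. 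You instead bound $\ln(1+x) > \tfrac{2x}{2+x}$ for $x>0$ (which follows from $\tfrac{d}{dx}\big[\ln(1+x) - \tfrac{2x}{2+x}\big] = \tfrac{x^2}{(1+x)(2+x)^2} \ge 0$ and equality at $x=0$), and a short cross-multiplication closes the gap. Both arguments are sound; yours is a bit more self-contained in that it avoids the second derivative and the limit argument, while the paper's is arguably more mechanical. Your observation about why the cruder bound $\ln(1+x) \le x$ fails here — both sides decay like $1/\PPrice^2$, so one needs a bound tight to second order — is exactly the right diagnostic and is worth keeping.
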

\begin{proof}
%
Define $\GFun(\PPrice)=\HFun(\PPrice)-\ln(\frac{\PPrice}{\PPrice-1})$. Observe that proving the inequality in the
statement of the lemma is equivalent to proving $\GFun(\PPrice)< \GFun(\PPrice')$ which we do by showing
$\GFun(\PPrice)$ has positive derivative and is therefore increasing.
\begin{align*}
   \GFun'(\PPrice) &= \ln\left(1+\frac{1}{\PPrice^2-1}\right)-\frac{1}{\PPrice^2+\PPrice}.
\end{align*}
We will show that $\GFun'(\PPrice)$ is decreasing which then implies $\GFun'(\PPrice) > 0$
because $\lim_{\PPrice\to\infty}\GFun'(\PPrice)=0$. Therefore we only need to show that
$\GFun''(\PPrice)<0$.
\begin{align*}
    \GFun''(\PPrice) &= \frac{-(3\PPrice+1)}{(\PPrice-1)\PPrice^2(\PPrice+1)^2} < 0.
\end{align*}
\end{proof}

\begin{lemma}
\label{lem:logsum}%
Consider any $a,b, z_1,\ldots,z_m \ge 0$ such that $a+b=\sum_{i=1}^m z_i$ and  $a \le z_m \le b$. Then
    \begin{align*}
        \ln(1+b)+\ln(1+a) \le \sum_{i=1}^m \ln(1+z_i).
    \end{align*}
\end{lemma}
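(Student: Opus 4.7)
The plan is to rewrite the claim multiplicatively as $(1+a)(1+b) \le \prod_{i=1}^{m}(1+z_i)$ and reduce it to a two-term statement, which is then settled by a short concavity argument on a quadratic.

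First I would dispatch the case $m = 2$. Here the hypothesis becomes $z_1 + z_2 = a + b$ with $a \le z_2 \le b$, which forces $z_1 = a+b-z_2 \in [a,b]$ as well. I consider the univariate function $g(c) \triangleq (1+c)(1+a+b-c)$, a concave quadratic in $c$. A direct substitution shows $g(a) = g(b) = (1+a)(1+b)$, so by concavity $g(z_1) \ge (1+a)(1+b)$, which is exactly $(1+z_1)(1+z_2) \ge (1+a)(1+b)$.

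For general $m$, I would lump the first $m-1$ terms into a single value $z' \triangleq z_1 + \cdots + z_{m-1} = a+b - z_m$. Since $z_m \in [a,b]$, one also has $z' \in [a,b]$, so the pair $(z', z_m)$ satisfies the hypothesis of the two-term case and yields $(1+a)(1+b) \le (1+z')(1+z_m)$. It remains to observe the elementary inequality $1 + z' \le \prod_{i=1}^{m-1}(1+z_i)$, which follows by expanding the product on the right-hand side: every cross-term is non-negative because all $z_i \ge 0$. Multiplying this by $(1+z_m)$ and combining with the previous bound gives $(1+a)(1+b) \le \prod_{i=1}^{m}(1+z_i)$; taking logarithms finishes the proof.

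The only subtlety is that one cannot directly invoke the subadditivity bound $\ln(1+x+y)\le\ln(1+x)+\ln(1+y)$ on $\ln(1+a)+\ln(1+b)$ itself, since that inequality points the wrong way. The concavity step on $g$ is precisely what bridges the gap, and the hypothesis $a \le z_m \le b$ is used solely to guarantee that the reduced pair $(z',z_m)$ sits inside the interval $[a,b]$ required by the two-term case.
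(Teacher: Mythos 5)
Your proof is correct and takes essentially the same route as the paper's: both arguments lump $z_1,\dots,z_{m-1}$ into a single sum $z'=\sum_{i<m}z_i$, reduce to comparing $(1+z')(1+z_m)$ with $(1+a)(1+b)$, and then use the expansion inequality $\prod_{i<m}(1+z_i)\ge 1+z'$ to finish. The only difference is presentational: for the two-term comparison the paper cites $a\le z_m$ and $z'\le b$ and leaves the reader to supply that the sums $z'+z_m=a+b$ are equal, whereas your concave-quadratic argument on $g(c)=(1+c)(1+a+b-c)$ makes that step explicit. Your version is a cleaner write-up of the same idea.
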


\begin{proof}
We can re-write the equation as
\begin{equation}\label{eq:lemd1}
\ln \left( (1+a)(1+b) \right) \leq \ln \prod_{i=1}^m (1+z_i)
\end{equation}

Observe that 
\begin{eqnarray*}
\prod_{i=1}^m (1+z_i) &\geq & 1+\sum_{i=1}^m z_i + z_m (\sum_{i=1}^{m-1}z_i)
\\ &=& (1+\sum_{i=1}^{m-1}z_i) (1+z_m)
\\
&\geq & (1+a)(1+b),
\end{eqnarray*}

where the first inequality follows by eliminating some terms from the expansion of $\prod_{i=1}^m (1+z_i)$, and the second inequality from the assumption that $(1+a)\leq (1+z_m)$ and  $(1+b)\geq (1+\sum_{i=1}^{m-1}z_i)$.  
\end{proof}

\end{document}